\newtheorem{Thm}{Theorem}[section]
\newtheorem{theorem}[Thm]{Theorem}
\newtheorem{proposition}[Thm]{Proposition}
\newtheorem{corollary}[Thm]{Corollary}
\newtheorem{lemma}[Thm]{Lemma}
\newtheorem{remark}{Remark}[section]
\newtheorem{remarks}{Remarks}[section]
\newtheorem{definition}[Thm]{Definition}
\begin{document}

\title{The Haag-Kastler Axioms for the  
$\mathscr{P}(\varphi)_2$ Model on the De Sitter Space}

\author{Christian D.\ J\"akel\footnote{jaekel@ime.usp.br, Universidade de S\~ao Paulo (USP), Brasil} \ 
and Jens Mund\footnote{mund@fisica.ufjf.br, Departamento de Fisica, Universidade de Juiz de Fora, Brasil}}        

\maketitle

\begin{abstract}
We establish the Haag-Kastler axioms for a class of interacting quantum field theories on the
two-dimensional de Sitter space, which satisfy \emph{finite speed of light}. 
The ${\mathscr P} (\varphi)_2$ model constructed in \cite{BJM-1}, describing 
massive scalar bosons with polynomial interactions, provides an example.  
\end{abstract}

\section{Introduction}
\label{sec:1}
 
The construction of interacting relativistic quantum field theories in $1+1$ dimensional Minkowski space 
exhibiting particle production, due to Glimm and Jaffe \cite{GJ}, is one of the crown jewels  
of mathematical physics. An interesting aspect of their construction is that the Haag-Kastler net 
for the interacting quantum field can be formulated on the Fock space for the free massive scalar field.
In fact, the von Neumann algebras associated to double cones whose base lie on the time-zero 
surface (\emph{i.e.}, the Cauchy surface) can be identified in both models. The spatial translations can 
as well be carried over from the model representing non-interacting particles. (Inspecting the Lie algebra 
relations of the Poincar\'e group, one realises that the Lorentz boosts can \emph{not} be taken over from the free
theory when the time evolution is modified. In other words: a deformation of the time-evolution requires 
a modification of the Lorentz boosts, while the spatial translations may remain unchanged.) Due to Haag's 
theorem (see \cite{Weiner} for a formulation in an operator-algebraic language), the interacting 
vacuum state can \emph{not} be given by a vector in Fock space (it is not normal with respect to the representation 
of the free massive field), but as long as the localization region of the observables one is interested in is bounded 
(in space-time) there are vectors in Fock space, which implement the interacting vacuum state locally. 
In other words, the interacting vacuum state is locally Fock.  

In de Sitter space, the situation is more favourable in two respects. Firstly, 
as the Cauchy surface is compact, Haag's no-go theorem no longer applies (see again \cite{Weiner})
and any state which is locally Fock is normal with respect to the Fock representation.  
Secondly, if one replaces the \emph{spectrum condition} (used in Minkowski space) by the 
\emph{geodesic KMS condition} of Borchers and Buchholz~\cite{BoB} to ensure stability of
the de Sitter vacuum state, the de Sitter vacuum is a thermal state for the Hawking temperature $(2\pi r)^{-1}$
for the unitary group of Lorentz boosts associated with any given wedge region  (see \cite{BoB}). 
This characterization of de Sitter vacua applies both to interacting and  non-interacting theories. 

Once one has constructed the family of von Neumann algebras describing the free massive scalar field, 
the facts just presented suggest that a look\footnote{In fact, the authors
would argue that there is a deep connection between Tomita-Takesaki modular theory and the 
representation theory of the group $SL(2, \mathbb{R})$ (and, more general, non-compact Lie-algebras which 
contain $SL(2, \mathbb{R})$ as a subgroup).} at (a generalisation \cite{DJP} of) Araki's  perturbation theory of KMS states
might be very useful: one  can pick an arbitrary wedge (which we denote as $W_1$)  and select a cyclic and 
separating vector $\Omega$ in the natural positive cone associated to the von Neumann algebra $\mathcal{A}_\circ(W_1)$ 
for the wedge $W_1$ (of the free massive field) and the Fock vacuum vector $\Omega_\circ$. 
If this vector happens to be invariant under the rotations for the model representing 
non-interacting particles (leaving  some geodesic Cauchy surface, which contains the edges of the wedge~$W_1$, 
invariant), one may hope that the candidates for the new boosts (\emph{i.e.}, implemented by the modular group for the 
pair~$(\mathcal{A}_\circ(W_1), \Omega)$) together with the old rotations generate a new 
representation of $SO_0(1,2)$. General criteria, which ensure that this is indeed the case, 
will be investigated elsewhere. However, for interesting examples, it is already known that the group closes:  
together with J.~Barata the authors have shown in \cite{BJM-1} that the interacting 
de Sitter vacuum vector for the ${\mathscr P} (\varphi)_2$  
model (which lies in the natural positive cone  for the pair $(\mathcal{A}_\circ(W_1), \Omega_\circ)$) 
can be used to define interacting boosts which, together with the 
rotations for the free field, generate a \emph{new} unitary representation of the Lorentz group $SO_0(1, 2)$.  

Now, if one already has an \emph{interacting} representation of the Lorentz group acting on Fock space, it still remains to be 
shown that  this new representation gives rise to a new family of von Neumann algebras satisfying the Haag-Kastler axioms. 
Our basic strategy to establish this fact is to identify the von Neumann algebras associated to the wedge~$W_1$ in the 
free and in the interacting theory. All other local algebras are then defined 
by applying the new representation of $SO_0(1, 2)$ (which represents the interacting dynamics) and taking intersections. 
This machinery ensures most of the Haag-Kastler axioms, but non-triviality of the double cone algebras\footnote{We note that 
non-triviality of the local algebras is difficult to establish in the approach pioneered by Gandalf Lechner, see \cite{Le2, Le3}.}
has to be established separately. As the rotations are carried over from the model representing non-interacting particles, the von 
Neumann algebras associated to wedges whose bases lie on the time-zero surface (\emph{i.e.}, the Cauchy surface) can be 
identified in both models, too. The same argument does not apply to the von Neumann algebras associated to double cones:
a von Neumann algebra associated to a double cone $\mathcal{O}$, whose base lies on the geodesic Cauchy surface, 
must be contained in the intersection of the von Neumann algebras associated to \emph{all} wedges which
contain~$\mathcal{O}$ (and \emph{not} only those whose base lie on the Cauchy surface). 
Consequently, an additional condition is needed to ensure non-triviality of the double cone algebras. A sufficient condition is 
that the interacting representation satisfies the \emph{finite speed of light} property introduced by Glimm and Jaffe~\cite{GJ}. 
As we will show in this work, the latter ensures that, just like in the Minkowski space case analyzed by Glimm and Jaffe, the 
von Neumann algebras associated to double cones whose base lie on the time-zero surface (\emph{i.e.}, the Cauchy surface) 
can be identified in both models. Finite speed of light  has been verified for the ${\mathscr P} (\varphi)_2$ model on de Sitter 
space in~\cite{BJM-1}.

\section{One-particle space}
\label{sec:2}

The two-dimensional de Sitter space 
	\begin{equation}
		\label{eqdSMin}
		\mathbb{dS} \doteq \left\{  
		x   \in \mathbb{R}^{1+2} 
		\mid 
		x_{0}^{2} - x_{1}^{2} - x_{2}^{2} = - r^2 \right\} \; , 
		\quad r >0 \; ,
	\end{equation}
can be viewed as a one-sheeted \emph{hyperboloid}, embedded in the $(1+2)$-dimensional Minkowski 
space~$\mathbb{R}^{1+2}$. The embedding (\ref{eqdSMin}) is compatible with the metric and the causal 
structure, \emph{i.e.}, the de Sitter space $\mathbb{dS}$ inherits its metric and the causal structure from the 
ambient Minkowski space. 

The isometry group of $\mathbb{dS}$ is the Lorentz group $O(1,2)$. The connected component  
containing the identity is $SO_0(1,2)$. This subgroup is generated by the \emph{rotations} 
	\[
		R_0(\alpha) \; \doteq \; \begin{pmatrix}
					1 &  0 &0 \cr
					0 &  \cos \alpha &  - \sin \alpha  \cr
					0  & \sin \alpha & \cos \alpha 
					\end{pmatrix} \; ,
		\qquad \alpha\in [0, 2 \pi) \; , 
	\]
and the \emph{Lorentz boosts} 
	\[
		\Lambda_1 (t)   \; \doteq \;   \begin{pmatrix}
				 \cosh t  &  0 &\sinh t \cr
     					  0  &  1 & 0  \cr
 				  \sinh t &  0 & \cosh t \end{pmatrix} \; ,							
		\qquad 
				t \in \mathbb{R} \; .   
	\]
We will also need the rotated boosts  
		\begin{equation} 
			\label{Lambdaalpha}
				\Lambda^{(\alpha)} (t)= 
				R_{0}(\alpha) \Lambda_{1}(t) R_{0}(-\alpha) \; , 
				\qquad t \in \mathbb{R} \; . 
		\end{equation} 
According to our convention, the boosts $\Lambda_1 (t) \equiv \Lambda^{(0)} (t)$ keep 
the $x_1$-axis  invariant, and therefore correspond  to  boosts in the $x_2$-direction. 

\goodbreak
The circle 
	\[
		S^1  \doteq \bigl\{  x   \in \mathbb{dS} \mid x_{0} = 0 \bigr\}
	\]
forms a \emph{Cauchy surface} for $\mathbb{dS}$. For two points $x= (0, r \sin \psi, r \cos \psi)$ 
and $y= (0, r \sin \psi', r \cos \psi')$ on the circle $S^1$, the Wightman two-point function of a scalar free field 
(analysed in \cite{BM}) equals 
	\begin{equation}
	\label{2-point}
		{\mathcal W}_\nu^{(2)} (  x,  y) =  c_\nu \, P_{s^+}  ( - \cos ( \psi -\psi'))  \; . 
	\end{equation}
Here $P_{s^+} $ is the \emph{Legendre function} for the parameter 
	\begin{equation} 
		\label{dd1} 
			s^\pm= -\tfrac{1}{2}  \mp i \nu \; , 
		\end{equation} 
with
	\begin{equation} 
		\label{dd-1} 
	\nu =  
			\begin{cases}
				i \sqrt{\frac{1}{4} -\zeta^2} & \text{if $ 0< \zeta < 1 /2$}  \, ,\\
				 \sqrt{\zeta^2 - \frac{1}{4} } & \text{if $ \zeta \ge 1 /2$} \, ,
			\end{cases} 
	\end{equation} 
and $\zeta$ the eigenvalue of the Casimir operator of $SO_0(1,2)$.
The two-point function \eqref{2-point} gives rise to a scalar product on $C^\infty (S^1)$:
	\begin{align}
				\label{eq:def-scalar-product-2}
		\langle h , h' \rangle_{\mathcal{H} }  & \doteq c_\nu
				\int_{S^1} r \, {\rm d}\psi \int_{S^1} r \, {\rm d}\psi' \; \overline{h(\psi)} \, 
				\\
				& \qquad \qquad \qquad \times P_{s^+}\big( - \cos(\psi - \psi') \big) \, h'(\psi') \; .
				\nonumber
	\end{align}
The value of the positive normalisation constant $c_\nu$ is
	\[
		c_\nu = -\frac{1}{2\sin  (\pi s^+ )} = 
		\frac{1}{2 \cos (i \nu \pi ) }  \; . 
	\] 
Note that the singularity for $\psi = \psi' $ is integrable; see for instance p.~364 in~\cite{BM}. 
The completion of $C^\infty (S^1)$ w.r.t.~this scalar product is a one-particle 
Hilbert space, which we denote by $\mathcal{H}$. Clearly, the scalar product \eqref{eq:def-scalar-product-2}
depends on the value of the Casimir operator $\zeta >0$, but we will suppress this dependence in the notation. 

\bigskip
The Fourier coefficients of the Legendre function appearing in \eqref{2-point} and 
\eqref{eq:def-scalar-product-2} were computed in the proof of Proposition 4.7.3 in \cite{BM}. 
The result can be casted in the following form: 

\begin{lemma} 
The scalar product (\ref{eq:def-scalar-product-2}) can be 
expressed as 
	\[
		\langle h , h' \rangle_{\mathcal{H} }  = 
		\bigl\langle h  , \tfrac{1}{2\omega} 
		h' \bigr\rangle_{L^2( S^1, r {\rm d} \psi)} \; ,  
	\]
with $\omega $ a strictly positive self-adjoint operator on $L^2( S^1, r {\rm d} \psi)$ with Fourier coefficients 
	\begin{equation}
	\label{omega-k}
		\widetilde {\omega}(k) = {r}^{-1} \, (k+s^+)
					 \frac{\Gamma \left( \frac{k+s^+}{2} \right)}{ \Gamma \left( \frac{k-s^+}{2} \right)}
			\frac{ \Gamma \left( \frac{k+1-s^+}{2} \right) }{ \Gamma \left( \frac{k+1+s^+}{2} \right)} \; ,
			\; \;  k\in \mathbb{Z} \; .
	\end{equation}
\end{lemma}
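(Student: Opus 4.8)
The plan is to exploit that the integral kernel in \eqref{eq:def-scalar-product-2} depends on the two points of $S^1$ only through the difference of angles, so that the bounded positive operator $K$ on $L^2(S^1, r\,{\rm d}\psi)$ determined by $\langle h,h'\rangle_{\mathcal H}=\langle h,Kh'\rangle_{L^2(S^1, r\,{\rm d}\psi)}$ is a convolution operator and hence is diagonalised by the characters $\psi\mapsto{\rm e}^{{\rm i}k\psi}$, $k\in\mathbb Z$. One computes directly that the eigenvalue of $K$ on the $k$-th mode is the corresponding Fourier coefficient of the $2\pi$-periodic, even $L^1$-function $\psi\mapsto c_\nu P_{s^+}(-\cos\psi)$, i.e.
\[
	\lambda_k \;=\; r\,c_\nu\int_0^{2\pi}P_{s^+}(-\cos\psi)\,{\rm e}^{-{\rm i}k\psi}\,{\rm d}\psi
	\;=\; 2r\,c_\nu\int_0^{\pi}P_{s^+}(-\cos\psi)\,\cos(k\psi)\,{\rm d}\psi \;.
\]
Since the operator $\tfrac{1}{2\omega}$ appearing in the assertion has eigenvalue $\tfrac{1}{2}\,\widetilde\omega(k)^{-1}$ on the same mode, the Lemma is equivalent to the identity $\lambda_k=\tfrac{1}{2}\,\widetilde\omega(k)^{-1}$ for every $k\in\mathbb Z$.

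Next I would invoke the evaluation of the Fourier integral $\lambda_k$ carried out in the proof of Proposition~4.7.3 of \cite{BM}, which rests on the known Fourier expansion of the Legendre function $P_{s^+}(-\cos\,\cdot\,)$. Inserting $c_\nu=-\tfrac{1}{2\sin(\pi s^+)}$ and simplifying the emerging ratio of Gamma functions---using the Legendre duplication formula to pass from arguments of the form $k+\cdots$ to arguments $\tfrac{k+\cdots}{2}$, and the reflection formula $\Gamma(z)\Gamma(1-z)=\pi/\sin(\pi z)$ to bring the result into the shape \eqref{omega-k} (which is then manifestly symmetric under $k\mapsto-k$)---one obtains $\lambda_k=\tfrac{1}{2}\,\widetilde\omega(k)^{-1}$ with $\widetilde\omega(k)$ as stated. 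I expect this Gamma-function bookkeeping, together with the careful tracking of the normalisation constants ($c_\nu$, the powers of $r$, the factor $2\pi$ from the Fourier transform, and the overall factor $\tfrac{1}{2}$), to be the only genuinely delicate step, since the conceptual content---diagonalisation by translation invariance---is immediate.

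Finally I would verify the asserted properties of $\omega$, i.e.\ of the self-adjoint operator on $L^2(S^1, r\,{\rm d}\psi)$ acting as multiplication by $\widetilde\omega(k)$ in the $k$-th Fourier mode. Positivity of each $\widetilde\omega(k)=\tfrac{1}{2}\lambda_k^{-1}$ is equivalent to $\lambda_k>0$, which holds because \eqref{eq:def-scalar-product-2} is a positive-definite scalar product for $\zeta>0$; moreover $\sup_k\lambda_k<\infty$ (in fact $\lambda_k\to 0$ by Riemann--Lebesgue, as $P_{s^+}(-\cos\,\cdot\,)$ is integrable on $S^1$), whence $\inf_k\widetilde\omega(k)=\tfrac{1}{2}\bigl(\sup_k\lambda_k\bigr)^{-1}>0$ and $\omega$ is strictly positive. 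Stirling's formula gives $\Gamma\bigl(\tfrac{k+a}{2}\bigr)\big/\Gamma\bigl(\tfrac{k+b}{2}\bigr)\sim(k/2)^{(a-b)/2}$ as $|k|\to\infty$, so the two Gamma-function ratios in \eqref{omega-k} contribute mutually reciprocal powers of $k$ and cancel to leading order, leaving $\widetilde\omega(k)=|k|/r+O(|k|^{-1})$. Hence $\omega$ is unbounded with compact resolvent, $\tfrac{1}{2\omega}$ is a bounded (indeed compact) positive operator, and together with the identity $\lambda_k=\tfrac{1}{2}\widetilde\omega(k)^{-1}$ this proves the Lemma.
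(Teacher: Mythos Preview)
Your proposal is correct and follows exactly the route indicated by the paper: the paper does not supply its own proof for this Lemma but simply states, immediately before it, that the Fourier coefficients of the Legendre function were computed in the proof of Proposition~4.7.3 of \cite{BM} and that ``the result can be casted in the following form''. Your write-up makes explicit what the paper leaves implicit---the diagonalisation of the convolution kernel by the characters of $S^1$, the invocation of \cite{BM} for the value of the Fourier integral, and the verification of strict positivity and the asymptotics of $\widetilde\omega(k)$---so it is more detailed than, but entirely in the spirit of, the paper's treatment.
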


We note that $\widetilde {\omega}(0)>0$ and $\widetilde {\omega}(-k) = \widetilde {\omega}(k)$. 
The map $\mathbb{R}^+ \ni k \mapsto \widetilde {\omega}(k)$
is monotonically increasing and, asymptotically, one finds
	\[
		\lim_{ k \to \pm \infty} \frac{ r \widetilde {\omega}(k)}{ | k |} = 1 \; , 
	\]
just like for the free  massive scalar field on Minkowski space. 

\bigskip

One of the key results in \cite[Theorem~4.7.5]{BJM-1} is that $\mathcal{H}$
carries a representation of the Lorentz group: 

\begin{theorem} 
\label{UIR-S1}
The rotations 
	\[
	\bigr( u(R_0(\alpha)) h \bigl) (\psi) = h (\psi - \alpha) \; , 
	\quad \alpha \in [0, 2\pi) \; , \;  h \in \mathcal{H}  \; , 
	\]
and the boosts 
\label{Umhatpage}
	\begin{equation}
	\label{UIR}
		u (\Lambda_1(t)) = {\rm e}^{i t \omega  r \, \widehat{\cos}} \; , \qquad t \in \mathbb{R} \; , 
	\end{equation}
generate a unitary irreducible representation of $SO_0(1,2)$ on $\mathcal{H}$, 
which extends to an (anti-)unitary representation of $O(1,2)$.
In particular, the reflection $\Theta_{W_1} := P_1T$ at the edge of the wedge $W_1$ is represented by the anti-linear map
	\[ 
		u \bigl(\Theta_{W_1}\bigr) h = C {P_1}_* h \; , \qquad h  \in \mathcal{H} \; , 
	\]
where ${P_1}_* h (\psi) \doteq h (\pi - \psi)$ and $(Ch)(\psi)\doteq \overline{h(\psi)}$ denotes the complex conjugation. 
\end{theorem}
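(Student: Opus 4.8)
The plan is to build the representation directly on $\mathcal H$ at the level of the Lie algebra and then integrate it, the only non‑elementary input being the explicit Fourier symbol of $\omega$ from the preceding Lemma. (An alternative I only sketch: since $\mathcal W_\nu^{(2)}$ is a function of the Lorentz invariant $x\cdot y$, the natural $SO_0(1,2)$‑action on test functions on $\mathbb{dS}$ descends to a unitary representation on the covariant one‑particle space, and transporting it through the isomorphism ``solution of the de Sitter Klein--Gordon equation'' $\leftrightarrow$ ``its Cauchy data on $S^1$'' gives the claim; the content of the Lemma is precisely that this isomorphism is isometric with the weight $\tfrac1{2\omega}$.) For the direct route, on the dense invariant domain $\mathcal D\subset\mathcal H$ of trigonometric polynomials set $L_0=-i\partial_\psi$, $L_1=\omega r\,\widehat{\cos}$ and $L_2\doteq -i[L_0,L_1]=\omega r\,\widehat{\sin}$, so that $L_2=u(R_0(\tfrac\pi2))\,L_1\,u(R_0(-\tfrac\pi2))$. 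Unitarity of the rotations is immediate: $\omega$ is a Fourier multiplier, hence commutes with the translations $u(R_0(\alpha))$, which therefore preserve $\langle\,\cdot\,,\tfrac1{2\omega}\,\cdot\,\rangle_{L^2(S^1,r\mathrm d\psi)}=\langle\,\cdot\,,\,\cdot\,\rangle_{\mathcal H}$; strong continuity, the group law and $u(R_0(2\pi))=\mathbb 1$ are clear.

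I would then settle self‑adjointness. Each $L_j$ is symmetric on $\mathcal H$: for $L_1$ this is the identity $\tfrac1{2\omega}\,\omega r\,\widehat{\cos}=\tfrac r2\,\widehat{\cos}=(\omega r\,\widehat{\cos})^{*}\tfrac1{2\omega}$, where $*$ is the $L^2(S^1,r\mathrm d\psi)$‑adjoint and $\omega$, $\widehat{\cos}$ are self‑adjoint there; similarly for $L_2$, while $L_0$ is obvious. In the orthonormal Fourier basis of $\mathcal H$ each $L_j$ becomes a real tridiagonal (Jacobi) matrix with vanishing diagonal and off‑diagonal entries of order $|k|$ as $|k|\to\infty$ (because $\widetilde{\omega}(k)\sim|k|/r$), so that $\sum_k|k|^{-1}=\infty$ and Carleman's criterion yields essential self‑adjointness of each $\overline{L_j}$ on $\mathcal D$; in particular $u(\Lambda_1(t))=\mathrm e^{it\overline{\omega r\widehat{\cos}}}$ is a well‑defined strongly continuous unitary group by Stone's theorem.

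Next I would check the Lie algebra relations on $\mathcal D$. From $[\partial_\psi,\omega]=0$ and $[\partial_\psi,\widehat{\cos}]=-\widehat{\sin}$, $[\partial_\psi,\widehat{\sin}]=\widehat{\cos}$ one gets $[L_0,L_1]=iL_2$ and $[L_0,L_2]=-iL_1$ at once. The remaining relation $[L_1,L_2]=-iL_0$ (note the sign: this is the non‑compact real form) is the crux: writing $[\omega r\widehat{\cos},\omega r\widehat{\sin}]=r^2\omega\big(\widehat{\cos}\,[\omega,\widehat{\sin}]-\widehat{\sin}\,[\omega,\widehat{\cos}]\big)$ and evaluating on the modes $\mathrm e^{ik\psi}$, this identity, together with the value of the quadratic Casimir $L_1^2+L_2^2-L_0^2$ fixed by $\zeta$, becomes a second‑order difference equation for $\widetilde{\omega}(k)$ whose solution is exactly the product of $\Gamma$‑ratios in \eqref{omega-k}; this is the computation behind \cite[Prop.~4.7.3]{BM} and \cite[Thm.~4.7.5]{BJM-1}, and it is the one genuine obstacle of the proof. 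Given the relations, Nelson's theorem applies (the Nelson Laplacian $L_0^2+L_1^2+L_2^2$ is an essentially self‑adjoint elliptic pseudodifferential operator of order two on the compact manifold $S^1$, hence has a dense set of analytic vectors, namely its eigenvectors) and produces a unitary representation of the simply connected cover $\widetilde{SO_0(1,2)}$ from $\{\overline{iL_j}\}$; since $u(R_0(2\pi))=\mathbb 1$, it descends to $SO_0(1,2)$, with $u(\Lambda_1(t))$ as in \eqref{UIR} by construction.

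Irreducibility is then cheap: an operator $A$ in the commutant commutes with all $u(R_0(\alpha))$, hence is diagonal in the Fourier basis (the eigenlines of $-i\partial_\psi$ in $\mathcal H$ are the $\mathbb C\,\mathrm e^{ik\psi}$); commuting in addition with $L_1$, whose matrix elements $\langle\mathrm e^{i(k\pm1)\psi},L_1\mathrm e^{ik\psi}\rangle_{\mathcal H}$ are nonzero for every $k$ (here $\widetilde{\omega}(k)>0$ for all $k$, noted after the Lemma, is used), forces $A$ to be scalar; thus the representation is irreducible, being the principal series for $\zeta\ge\tfrac14$ and the complementary series for $0<\zeta<\tfrac14$. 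For the $O(1,2)$‑extension I would take $\Theta_{W_1}=P_1T$ as a coset representative and set $u(\Theta_{W_1})\doteq C\,{P_1}_*$. Since $C$ and ${P_1}_*$ are commuting involutions, $u(\Theta_{W_1})^2=\mathbb 1$; since $\widetilde{\omega}(k)$ is real and even (after the Lemma), $\tfrac1{2\omega}$ commutes with both $C$ and ${P_1}_*$, so $u(\Theta_{W_1})$ is anti‑unitary on $\mathcal H$; and $u(\Theta_{W_1})\,u(g)\,u(\Theta_{W_1})^{-1}=u(\Theta_{W_1}\,g\,\Theta_{W_1})$ is verified on the generators, since ${P_1}_*$‑conjugation sends $u(R_0(\alpha))\mapsto u(R_0(-\alpha))$ and $\widehat{\cos}\mapsto-\widehat{\cos}$ while fixing $\omega$, and the anti‑linearity of $C$ reverses the exponent of $\mathrm e^{it\omega r\widehat{\cos}}$, so that $u(\Theta_{W_1})$ implements $\Theta_{W_1}R_0(\alpha)\Theta_{W_1}=R_0(-\alpha)$ and $\Theta_{W_1}\Lambda_1(t)\Theta_{W_1}=\Lambda_1(t)$.
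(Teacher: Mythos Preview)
The paper does not give its own proof of this theorem: it simply quotes the result from \cite[Theorem~4.7.5]{BJM-1}. Judging from the paper's later references to the ``covariant one-particle space'', the Fourier--Helgason transform, and the unitary identifying it with the canonical space \cite[Prop.~6.4.5, Thm.~6.4.6]{BJM-1}, the argument in \cite{BJM-1} follows the route you sketch in your opening parenthesis: define the representation covariantly on test functions on $\mathbb{dS}$ (where unitarity and the group law are automatic from Lorentz invariance of $\mathcal W_\nu^{(2)}$) and then transport it to $\mathcal H$ through the Cauchy-data isomorphism, the work being to identify the transported boost generator as $\omega r\,\widehat{\cos}$. Your main proposal takes the complementary direct route: build the Lie algebra representation on the canonical space $\mathcal H$ and integrate via Nelson. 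Both approaches end up resting on the same $\Gamma$-function identity (your ``one genuine obstacle''), which in your setup appears as the relation $r^2\,\widetilde\omega(k)\bigl(\widetilde\omega(k+1)-\widetilde\omega(k-1)\bigr)=2k$ equivalent to $[L_1,L_2]=-iL_0$, and which together with the Casimir condition gives the first-order recursion $r^2\,\widetilde\omega(k)\widetilde\omega(k+1)=(k-s^+)(k+1+s^+)$ solved by \eqref{omega-k}.

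A few small points. Your displayed expression for $[L_1,L_2]$ has an ordering slip: since $[\widehat{\cos},\widehat{\sin}]=0$ one gets $[\omega A,\omega B]=\omega\bigl([\omega,B]A-[\omega,A]B\bigr)$ with the commutators on the left of $A,B$; this is harmless because you evaluate on Fourier modes anyway. Your series labels should follow the paper's convention (principal for $\zeta\ge\tfrac12$, complementary for $0<\zeta<\tfrac12$; the threshold $\tfrac14$ is not for $\zeta$ itself). Finally, your ellipticity argument for the Nelson Laplacian is correct but unnecessarily heavy: once the Lie algebra relations hold on $\mathcal D$, a direct computation on the modes shows that $L_1^2+L_2^2$ is already \emph{diagonal} in the Fourier basis, so the Casimir is a scalar and the Nelson Laplacian equals $2L_0^2+\text{const}=-2\partial_\psi^2+\text{const}$, whose essential self-adjointness on trigonometric polynomials is immediate. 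This also sidesteps any worry about Carleman's criterion for two-sided Jacobi matrices.
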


\begin{remarks}
\quad 
\begin{itemize}
\item[$i.)$]
The symbol  "$\widehat{\cos}$" in the exponent in \eqref{UIR} denotes the 
multiplication operator, mapping a function $f (\psi) $ to  $\cos \psi \cdot f (\psi)$.
\item[$ii.)$] It is remarkable that both the representations of the \emph{principle series} (those with eigenvalue $\zeta \ge 1/2$
of the Casimir operator) as well as the representations of the \emph{complementary series} (those with $0< \zeta < 1/2$) 
can be casted in the form given in Theorem~\ref{UIR-S1}. Note that both $\mathcal{H}$ and $\omega$ depend on $\zeta \, $.
\end{itemize}
\end{remarks}

From the given unitary representation of the Lorentz group we now define a family of $\mathbb{R}$-linear 
subspaces of $\mathcal{H}$ by a standard construction, called \emph{modular localization} (see~\cite{BGL,S97}), 
which draws inspiration from the results \cite{BiWia,BiWib} of Bisognano and Wichmann.  

\begin{definition}(Modular Localization).
\label{def-h-loc}
\begin{itemize}
\item[$i.)$] For the wedge $W _1\doteq \bigl\{  x \in \mathbb{dS} \mid x_2 > |x_0 | \bigr\}$, we set 
	\[			
	\mathcal{H} ( W _1 ) \doteq
		 \bigl\{ h \in {\mathscr D} \bigl( u (\Lambda_1(i \pi r ))\bigr) \mid u(P_1T) u \bigl(\Lambda_1(i \pi r )\bigr) h = h \bigr\} \; . 
	\]
\item[$ii.)$] For an arbitrary wedge $W= \Lambda W_1$, $\Lambda \in SO_0(1,2)$,
we set 
	\begin{equation} \label{h-tilde-of-W}
		\mathcal{H} ( W ) \doteq u(\Lambda) \mathcal{H} ( W_1) \; . 
	\end{equation}
\item[$iii.)$] For a causally complete, open and bounded region ${\mathcal O}$, we set
	\begin{equation}
	\label{h-tilde-of-O}
		\mathcal{H}({\mathcal O}) \doteq \bigcap_{ {\mathcal O} \subset W} \mathcal{H}(W) \; . 
	\end{equation}
\end{itemize}
\end{definition}

Note that $\mathcal{H} ( W )$ is well-defined by \eqref{h-tilde-of-W} due to the following standard
argument~\cite{BGL}: The only Lorentz transformations, which leave the wedge $W_1$ invariant, 
are of the form $\Lambda \equiv \Lambda_1 (t)$ for some $t \in \mathbb{R}$. But the representer 
$u\bigl( \Lambda_1(t) \bigr)$ of such a boost commutes with both $u(P_1T)$ and with 
$u \bigl(\Lambda_1(i \pi r )\bigr)$ and thus leaves the $\mathbb{R}$-linear 
subspace $\mathcal{H} ( W _1 )$ invariant.  

\begin{proposition} 
\label{Prop-ii.4}
The subspaces introduced in Definition~\ref{def-h-loc} have the following properties:
\begin{itemize}
\item[$i.)$] (Wedge Duality). The $\mathbb{R}$-linear subspace $\mathcal{H}(W')$ for the opposite wedge
	\[ 
		\qquad 
		W' \doteq \bigl\{  x \in \mathbb{dS} \mid x \; \hbox{space-like separated from} \; W \bigr\}  
	\]
equals the symplectic complement  
	\[
		\qquad 
		\mathcal{H} ( W )'  
		\doteq
		\bigl\{ h \in \mathcal{H} \mid  \Im \langle h, g \rangle = 0 \; \; 
		\forall g \in \mathcal{H} ( W )  \bigr\}   
	\]
of $\mathcal{H} ( W )$.  
\item [$ii.)$] (Covariance). For $\Lambda \in SO_0(1,2)$ and $\mathcal{O}$ a causally complete, open, 
connected and bounded region,  
	\[
		\qquad
		\mathcal{H} ( \Lambda \mathcal{O}) = u (\Lambda) \mathcal{H} ( \mathcal{O}) \, .
	\]
\item [$iii.)$] (Microcausality). For two space-like separated causally complete, open, bounded regions
$\mathcal{O}_1$ and $\mathcal{O}_2$,  
	\[
		\qquad
		\Im \langle h_1, h_2 \rangle_{\mathcal{H}} = 0 
		\quad \forall h_i \in \mathcal{H} ( \mathcal{O}_i) \; , \quad i=1,2 \, .
	\]
\end{itemize}
\end{proposition}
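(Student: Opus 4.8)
The plan is to establish the three properties in a logical order, exploiting the fact that all of the structure descends from a single standard object: the wedge subspace $\mathcal{H}(W_1)$ defined via the polar decomposition of the (unbounded, closable) operator $u(P_1T)\,u(\Lambda_1(i\pi r))$. First I would recall the standard modular theory of standard subspaces (Rieffel–Van Daele, Longo, Brunetti–Guido–Longo \cite{BGL}): the operator $S_{W_1} \doteq u(\Theta_{W_1})\,u(\Lambda_1(i\pi r))$ is a closed antilinear involution on $\mathcal{H}$, its polar decomposition $S_{W_1}=J\Delta^{1/2}$ yields a modular conjugation $J=u(\Theta_{W_1})$ and modular operator $\Delta^{it}=u(\Lambda_1(-rt))$, and $\mathcal{H}(W_1)=\ker(\mathbbm{1}-S_{W_1})$ is a standard subspace whose symplectic complement satisfies $\mathcal{H}(W_1)'=\ker(\mathbbm{1}-S_{W_1}^*)= J\mathcal{H}(W_1)$ by the general theory. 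The cyclicity and separating property (i.e.\ $\mathcal{H}(W_1)$ standard) follows since $u$ is an \emph{irreducible} unitary representation and the Bisognano–Wichmann pair $(J,\Delta)$ generated by $u\restriction \text{boosts}$ together with $u(\Theta_{W_1})$ satisfies the commutation relations of the $ax+b$-type group, so no nonzero vector is annihilated.

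For part $ii.)$ \emph{(Covariance)}: for a \emph{wedge} $W=\Lambda W_1$ this is the definition \eqref{h-tilde-of-W}, but one must check it is consistent, i.e.\ independent of the choice of $\Lambda$; this is exactly the remark following Definition~\ref{def-h-loc}. For a general causally complete bounded region $\mathcal{O}$ one writes $\mathcal{H}(\Lambda\mathcal{O})=\bigcap_{\Lambda\mathcal{O}\subset W}\mathcal{H}(W)$, substitutes $W=\Lambda W''$ with $\mathcal{O}\subset W''$, and uses $\mathcal{H}(\Lambda W'')=u(\Lambda)\mathcal{H}(W'')$ together with the fact that $u(\Lambda)$ is a bijection to pull $u(\Lambda)$ out of the intersection. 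The only subtlety is that the index set $\{W \mid \mathcal{O}\subset W\}$ is mapped bijectively onto $\{W\mid \Lambda\mathcal{O}\subset W\}$ by $W\mapsto\Lambda W$, which is immediate since $SO_0(1,2)$ acts on wedges and preserves inclusions.

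For part $i.)$ \emph{(Wedge Duality)}: I would first prove $\mathcal{H}(W_1')=\mathcal{H}(W_1)'$ and then propagate by covariance. One shows that the opposite wedge $W_1'$ equals $\Theta_{W_1}W_1$ (geometric fact on $\mathbb{dS}$: the causal complement of $W_1$ is the image of $W_1$ under the edge reflection $P_1T$, up to a boost), hence $\mathcal{H}(W_1')=u(\Theta_{W_1})\mathcal{H}(W_1)=J\mathcal{H}(W_1)$; combined with the modular-theory identity $J\mathcal{H}(W_1)=\mathcal{H}(W_1)'$ recalled above, this gives duality for $W_1$. For an arbitrary $W=\Lambda W_1$ one has $W'=\Lambda W_1'$, so $\mathcal{H}(W')=u(\Lambda)\mathcal{H}(W_1')=u(\Lambda)\mathcal{H}(W_1)'=\bigl(u(\Lambda)\mathcal{H}(W_1)\bigr)'=\mathcal{H}(W)'$, where the middle equality uses that the symplectic complement is covariant under the unitary (more precisely, unitary-or-antiunitary) $u(\Lambda)$ because $\Im\langle\cdot,\cdot\rangle$ is preserved.

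Finally, part $iii.)$ \emph{(Microcausality)} is a consequence of $i.)$ and $ii.)$: if $\mathcal{O}_1$ and $\mathcal{O}_2$ are spacelike separated, then $\mathcal{O}_2\subset\mathcal{O}_1'$, and since every wedge containing $\mathcal{O}_1'$ also arises, one gets $\mathcal{H}(\mathcal{O}_2)\subset\mathcal{H}(\mathcal{O}_1')$. The point is then to show $\mathcal{H}(\mathcal{O}_1')\subset\mathcal{H}(\mathcal{O}_1)'$; this follows because $\mathcal{O}_1'$ is a union of wedges each of which is the causal complement of a wedge containing $\mathcal{O}_1$, so $\mathcal{H}(\mathcal{O}_1')\subset\bigcap_{\mathcal{O}_1\subset W}\mathcal{H}(W')=\bigcap_{\mathcal{O}_1\subset W}\mathcal{H}(W)'=\bigl(\,\overline{\sum_{\mathcal{O}_1\subset W}\mathcal{H}(W)}\,\bigr)'\supset\mathcal{H}(\mathcal{O}_1)'$ — one has to be a little careful here and instead argue directly: $h_1\in\mathcal{H}(\mathcal{O}_1)$ lies in $\mathcal{H}(W)$ for every wedge $W\supset\mathcal{O}_1$, $h_2\in\mathcal{H}(\mathcal{O}_2)$ lies in $\mathcal{H}(\widetilde W)$ for every $\widetilde W\supset\mathcal{O}_2$; choosing one such pair with $\widetilde W\subset W'$ (possible because $\mathcal{O}_1,\mathcal{O}_2$ are spacelike, so some wedge separates them) gives $h_2\in\mathcal{H}(W')=\mathcal{H}(W)'$, whence $\Im\langle h_1,h_2\rangle=0$.

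\medskip
\noindent\textbf{Main obstacle.} The routine part is the modular-theory bookkeeping; the genuine work is the \emph{geometric input}: identifying $W_1'=\Theta_{W_1}W_1$ on $\mathbb{dS}$, checking that for spacelike separated bounded $\mathcal{O}_1,\mathcal{O}_2$ there actually exists a wedge $W$ with $\mathcal{O}_1\subset W$ and $\mathcal{O}_2\subset W'$, and verifying that the index sets of wedges behave well under $SO_0(1,2)$. On the compact Cauchy surface $S^1$ wedges correspond to arcs (half-circles), their causal complements to the complementary arcs, and these combinatorial facts are elementary but must be stated cleanly; this is where I expect the proof to spend most of its words. A secondary technical point is the passage from the defining intersection over \emph{all} wedges to a cofinal subfamily in the microcausality argument, which needs the fact that $\mathcal{H}(\mathcal{O})$ only changes if one enlarges the family of wedges, never if one passes to a cofinal subfamily — true by monotonicity of $W\mapsto\mathcal{H}(W)$ under inclusion, which itself follows from duality plus covariance.
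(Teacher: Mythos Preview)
Your approach coincides with the paper's: both reduce to BGL-style modular localization, establish wedge duality for $W_1$ via the Tomita operator $s_{W_1}=u(\Theta_{W_1})u(\Lambda_1(i\pi r))$ and the identity $J\mathcal{H}(W_1)=\mathcal{H}(W_1)'$, propagate to arbitrary wedges by covariance, and derive microcausality by choosing a separating wedge $W$ with $\mathcal{O}_1\subset W$, $\mathcal{O}_2\subset W'$.

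There is, however, one genuine missing step in your wedge-duality argument. You write that the geometric fact $W_1'=\Theta_{W_1}W_1$ gives ``hence $\mathcal{H}(W_1')=u(\Theta_{W_1})\mathcal{H}(W_1)$''. This inference is not immediate: the definition \eqref{h-tilde-of-W} only assigns $\mathcal{H}(\Lambda W_1)=u(\Lambda)\mathcal{H}(W_1)$ for $\Lambda\in SO_0(1,2)$, and $\Theta_{W_1}=P_1T$ lies in $O(1,2)\setminus SO_0(1,2)$. By definition one has $\mathcal{H}(W_1')=u(R_0(\pi))\mathcal{H}(W_1)$ (since $R_0(\pi)W_1=W_1'$ with $R_0(\pi)\in SO_0(1,2)$), and the link to $u(\Theta_{W_1})\mathcal{H}(W_1)$ requires an extra argument. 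The paper supplies it: the group relations among $\Lambda_1(t)$, $R_0(\pi)$ and $\Theta_{W_1}$ show that $u(R_0(\pi))u(\Theta_{W_1})$ commutes with the antilinear Tomita operator $s_{W_1}$ and therefore leaves its $+1$-eigenspace $\mathcal{H}(W_1)$ invariant; this yields $u(R_0(\pi))\mathcal{H}(W_1)=u(\Theta_{W_1})\mathcal{H}(W_1)$, after which your argument goes through. Without this bridge, the ``hence'' is a gap.
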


\begin{proof}
The items $i.)$ and $ii.)$ follow from Proposition 5.2 of~\cite{BGL}, while $iii.)$ follows from Theorem 5.4
of~\cite{BGL}. For the convenience of the reader, we indicate the arguments.
As a consequence of the group relations, the anti-linear operator 
	\[
		s_{W_1} \doteq u \bigl(\Theta_{W_1}\bigr) u \bigl(\Lambda_1(i \pi r)\bigr)
	\]
is the Tomita operator for $\mathcal{H}(W_1)$;
namely, it is a densely defined involution (\emph{i.e.}, $ s_{W_1}^2\subset \mathbb{1}$) 
and its eigenspace for eigenvalue $+1$ is just $\mathcal{H}(W_1)$: 
from the group relations between $\Lambda_1(t)$, $R_0(\pi)$ and $\Theta_{W_1}= P_1T$  
it follows that the operator $u(R_0(\pi))u (\Theta_{W_1})$ commutes with the anti-linear operator 
$s_{W_1}$ and thus leaves the $\mathbb{R}$-linear subspace $\mathcal{H}(W_1)$
invariant. But this implies that 
	\begin{equation} 
		\label{eqCPT}
		u \bigl(\Theta_{W_1}\bigr)  \mathcal{H}(W_1) = \mathcal{H}(W_1') \; , 
	\end{equation}
since $R_0(\pi) W_1=W_1'$. On the other hand, a general result on Tomita operators~\cite[Prop.~2.3]{RvD} 
asserts that the anti-unitary part $u (\Theta_{W_1})$ in the polar decomposition of $s_{W_1}$ maps 
the eigenspace for the eigenvalue $1$, namely $\mathcal{H}(W_1)$, 
onto its symplectic complement $\mathcal{H}(W_1)'$. Thus, Eq.~\eqref{eqCPT} is just \emph{wedge duality} 
(\emph{i.e.}, property $i.)$ in the proposition) for $W_1$.  
For other wedges, duality follows from wedge covariance~\eqref{h-tilde-of-W}. 

To prove property $iii.)$, \emph{microcausality}, pick a wedge $W$ such that $\mathcal{O}_1 \subset W$ and $\mathcal{O}_2 \subset W'$. 
Then
	\[
		\mathcal{H}(\mathcal{O}_1)\subset \mathcal{H}(W)  = \mathcal{H}(W')' \subset \mathcal{H}(\mathcal{O}_2)' , 
	\]
where we have used wedge duality.

Property $ii.)$, \emph{covariance}, follows from the definitions of $\mathcal{H}(W)$ and $\mathcal{H}(\mathcal{O})$, respectively.
\end{proof}

It will be useful to have an explicit formula for the real subspaces associated to a certain class of regions, 
namely double cones or wedges with \emph{base} on~$S^1$.
For these regions, an alternative localization map (widely used in quantum field theory on Minkowski space) is available, which exploits the 
support properties of Cauchy data of solutions of the Klein-Gordon equation, by identifying  
$\Re h$ and $\omega^{-1}\Im h$, for $h\in \mathcal{H}$, with initial data of a 
solution of the Klein-Gordon equation: 
for $I$ an open interval in $S^1$, one defines the $\mathbb{R}$-linear subspaces 
	\begin{equation}
		\label{H_I}
		\mathcal{H}_{I}  
		\doteq 
		\left\{ h \in \mathcal{H} \mid  
		\operatorname{supp} \Re h  \subset I \, , \;
                \operatorname{supp} \omega^{-1}\Im h \subset I
                \right\} .
         \end{equation}
It has been shown in~\cite[Prop.~6.5.5]{BJM-1} that 
	\begin{equation}
		\label{finite-speed-of-light-1particle}
		u(\Lambda_1(t)) \mathcal{H}_{I} \subset \mathcal{H}_{I_t} \; ,
		\qquad I_t\doteq \Gamma(\Lambda_1(t)I)\cap S^1 \; ,
\end{equation}
where  $\Gamma(M)$ is the domain of dependence of a set~$M$, \emph{i.e.}, 
the union of the future~$\Gamma^+ (M)$ and the past $\Gamma^-(M)$ of $M$. Equation
\eqref{finite-speed-of-light-1particle} expresses the hyperbolic character of the Klein-Gordon 
equation in the Hilbert space context. It is common usage to refer to \eqref{finite-speed-of-light-1particle}  
as \emph{finite speed of light}. A similar condition, which applies to interacting theories, will 
be presented in \eqref{e6.1ef}. 

\begin{proposition} 
The subspaces introduced in Definition~\ref{def-h-loc} have the following properties:
\begin{itemize}
\item[$i.)$] (Modular Localization $\Leftrightarrow$ Localization of Cauchy data). 
For $I$ a bounded open interval of length $ | I |  \le \pi \, r$ in $S^1$ there holds 
	\begin{equation}
	\label{cauchy-localization}
		\qquad
		\mathcal{H} ( \mathcal{O}_I) 
		= \mathcal{H}_I \;  , 
	\end{equation}
where $\mathcal{O}_I = I''$ denotes the \emph{causal completion} of the interval $I$ in $\mathbb{dS}$. 
\item [$ii.)$] (Additivity). For any open interval $I\subset I_+$, we
have 
	\begin{equation}
	\label{additivity}
		\mathcal{H} (W_1) =  \bigvee_{R_0(\alpha)I\subset I_+}
                        \mathcal{H} \bigl( \mathcal{O}_{R_0(\alpha)I} \bigr)  \; ,  
	\end{equation}
        where $\vee$ denotes the closure of the $\mathbb{R}$-linear span in $\mathcal{H}$. 
\item [$iii.)$] (Standard Subspaces). 
The $\mathbb{R}$-linear subspaces $\mathcal{H}(\mathcal{O})$ are \emph{standard}, \emph{i.e.}, 
	\begin{equation}
	\mathcal{H}(\mathcal{O}) \cap i \mathcal{H}(\mathcal{O}) = \{0 \}
	\; , 
	\qquad
	 \overline{\mathcal{H}(\mathcal{O}) + i \mathcal{H}(\mathcal{O})} = \mathcal{H}  \; ,  
	\label{standard property}
	\end{equation}
for all double cones $\mathcal{O} \subset \mathbb{dS}$. 
\end{itemize}
\end{proposition}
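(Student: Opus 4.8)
The plan is to reduce all three statements to the single wedge $W_1$ and its base $I_+=W_1\cap S^1$. A rotation $R_0(\alpha)$ acts on $S^1$ by translation and is a Fourier multiplier, so $u(R_0(\alpha))$ commutes with $\omega$ and maps $\mathcal H_I$ onto $\mathcal H_{R_0(\alpha)I}$; together with Proposition~\ref{Prop-ii.4}~$ii.$ this lets us rotate freely. The cornerstone is the \emph{base case} $\mathcal H(W_1)=\mathcal H_{I_+}$, the Bisognano--Wichmann-type identification of the modular localization of a wedge with the Cauchy-data localization of its base; it is the instance $|I|=\pi r$ of \eqref{cauchy-localization} and follows from Theorem~\ref{UIR-S1}, \eqref{finite-speed-of-light-1particle} and reflection positivity as in \cite{BJM-1,BM} (one inclusion uses that $\mathcal H_{I_+}$ is invariant under every $u(\Lambda_1(t))$ — since $\Gamma(\Lambda_1(t)I_+)\cap S^1=I_+$ — and hence extends analytically to the strip $0<\Im t<\pi r$ with $u(\Theta_{W_1})\,u(\Lambda_1(i\pi r))\,h=h$). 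Two further preparations will be used. First, by the Cartan decomposition $\Lambda=R_0(\alpha_1)\Lambda_1(\tau)R_0(\alpha_2)$ and the rotational covariance just noted, \eqref{finite-speed-of-light-1particle} upgrades to $u(\Lambda)\,\mathcal H_I\subseteq\mathcal H_{\Gamma(\Lambda I)\cap S^1}$ for $\Lambda\in SO_0(1,2)$. Second, if $M\subseteq W_1$ then $\Gamma(M)\cap S^1\subseteq I_+$: the wedge $W_1$ is spacelike separated from $I_-\doteq W_1'\cap S^1$, so no point of $W_1$ is causally related to a point of $I_-$, whence $(\Gamma^+(q)\cup\Gamma^-(q))\cap S^1\subseteq I_+$ for every $q\in W_1$.

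For part~$i.$, write $I=(\psi_1,\psi_2)$ with $0<\psi_2-\psi_1\le\pi$; after a rotation we may assume $I\subset I_+$. One first shows $\mathcal H(\mathcal O_I)\subseteq\mathcal H_I$: $\mathcal O_I$ is contained in the two wedges $W^\pm$ with bases $(\psi_1,\psi_1+\pi)$ resp.\ $(\psi_2-\pi,\psi_2)$ on $S^1$, which are rotates of $W_1$, so by the base case and covariance $\mathcal H(W^+)=\mathcal H_{(\psi_1,\psi_1+\pi)}$ and $\mathcal H(W^-)=\mathcal H_{(\psi_2-\pi,\psi_2)}$; since $W^\pm\supseteq\mathcal O_I$, \eqref{h-tilde-of-O} gives $\mathcal H(\mathcal O_I)\subseteq\mathcal H(W^+)\cap\mathcal H(W^-)=\mathcal H_I$, the last equality being immediate from \eqref{H_I} (the two bases intersect in exactly $I$). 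The reverse inclusion $\mathcal H_I\subseteq\mathcal H(\mathcal O_I)$ is where finite speed of light genuinely enters: for an \emph{arbitrary} wedge $W\supseteq\mathcal O_I$, $W=\Lambda W_1$, we have $\Lambda^{-1}I\subseteq\Lambda^{-1}\mathcal O_I\subseteq W_1$, so by the two preparations above $u(\Lambda^{-1})\mathcal H_I\subseteq\mathcal H_{\Gamma(\Lambda^{-1}I)\cap S^1}\subseteq\mathcal H_{I_+}=\mathcal H(W_1)$, i.e.\ $\mathcal H_I\subseteq u(\Lambda)\mathcal H(W_1)=\mathcal H(W)$; intersecting over all such $W$ gives $\mathcal H_I\subseteq\mathcal H(\mathcal O_I)$, proving \eqref{cauchy-localization}.

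For part~$ii.$, \eqref{cauchy-localization} and the base case reduce \eqref{additivity} to $\mathcal H_{I_+}=\overline{\sum_{R_0(\alpha)I\subseteq I_+}\mathcal H_{R_0(\alpha)I}}$ (closure of the real span). The inclusion ``$\supseteq$'' is clear. For ``$\subseteq$'', given $h\in\mathcal H_{I_+}$ its Cauchy data $q=\Re h$ and $p=\omega^{-1}\Im h$ have support a compact subset of the open interval $I_+$; as the translates $R_0(\alpha)I$ with $R_0(\alpha)I\subseteq I_+$ form an open cover of $I_+$, choose a finite subcover $I_1,\dots,I_n$ of $\operatorname{supp}q\cup\operatorname{supp}p$ and a subordinate smooth partition of unity $\{\chi_j\}$. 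Then $h=\sum_j h_j$ with $h_j\doteq\chi_j q+i\,\omega(\chi_j p)$, whose Cauchy data $(\chi_j q,\chi_j p)$ is supported in $I_j$, so $h_j\in\mathcal H_{I_j}$; and $h_j\in\mathcal H$ because multiplication by a smooth cut-off preserves the Sobolev regularity $q$ and $p$ must have in order that $\tfrac12\langle\chi_j q,\omega^{-1}\chi_j q\rangle+\tfrac12\langle\chi_j p,\omega\,\chi_j p\rangle<\infty$. This Sobolev bookkeeping is the only mild technicality.

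For part~$iii.$ fix a double cone $\mathcal O$ and a wedge $W\supseteq\mathcal O$. As established in the proof of Proposition~\ref{Prop-ii.4}, $s_{W_1}=u(\Theta_{W_1})\,u(\Lambda_1(i\pi r))$ is densely defined with dense range (the latter since $u(\Lambda_1(i\pi r))$ is injective), i.e.\ the Tomita operator of the standard subspace $\mathcal H(W_1)$; by covariance every $\mathcal H(W)$ is standard. From $\mathcal H(\mathcal O)\subseteq\mathcal H(W)$ we get $\mathcal H(\mathcal O)\cap i\mathcal H(\mathcal O)\subseteq\mathcal H(W)\cap i\mathcal H(W)=\{0\}$, the first half of \eqref{standard property}. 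For cyclicity we may, by covariance and part~$i.$, take $\mathcal O=\mathcal O_I$ with $I\subset S^1$, $|I|<\pi r$, so $\mathcal H(\mathcal O)=\mathcal H_I$. Since the $\widetilde\omega(k)$ are real, $\omega$ commutes with complex conjugation, and a direct computation of the symplectic complement gives $\mathcal H_I'=\mathcal H_{I^c}$, $I^c=S^1\setminus\bar I$; hence the complex-orthogonal complement of $\mathcal H_I+i\mathcal H_I$ equals $\mathcal H_I'\cap i\mathcal H_I'=\{\,k\in\mathcal H:\operatorname{supp}k\subseteq\bar{I^c}\ \text{and}\ \operatorname{supp}\omega^{-1}k\subseteq\bar{I^c}\,\}$, to be shown trivial. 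By the Lemma above, $\omega^{-1}$ has an integral kernel proportional to $P_{s^+}(-\cos(\psi-\psi'))$, real-analytic off the diagonal ($P_{s^+}$ is analytic on $\mathbb C\setminus(-\infty,-1]$, and $-\cos\theta\in(-\infty,-1]$ only for $\theta\equiv0$); a Reeh--Schlieder / unique-continuation argument then forces any $k$ localized together with $\omega^{-1}k$ in a proper closed subinterval to vanish, so the intersection is $\{0\}$ and $\overline{\mathcal H(\mathcal O)+i\mathcal H(\mathcal O)}=\mathcal H$ — this is the one-particle form of the Reeh--Schlieder property of the free de Sitter vacuum \cite{BM}. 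I expect this last step — the only one resting on genuine analyticity rather than on the group relations and finite speed of light — to be the main obstacle.
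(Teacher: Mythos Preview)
Your proof is correct and shares the paper's overall architecture: establish the base case $\mathcal{H}(W_1)=\mathcal{H}_{I_+}$, use finite speed of light to reach arbitrary wedges, and a partition of unity plus Sobolev boundedness (the paper's Lemma~\ref{Lm:2.6}) for additivity. Two differences are worth recording.

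For the inclusion $\mathcal{H}_I\subset\mathcal{H}(\mathcal{O}_I)$ in part~$i.)$ you argue \emph{directly}: given $W=\Lambda W_1\supset\mathcal{O}_I$, you push $\mathcal{H}_I$ back by $u(\Lambda^{-1})$ into $\mathcal{H}_{I_+}=\mathcal{H}(W_1)$ via the Cartan decomposition and your upgraded finite-speed-of-light inclusion. The paper instead passes to the \emph{opposite} wedge~$W'$, uses finite speed of light to obtain $\mathcal{H}(W')\subset\mathcal{H}_{I^c}$, and then invokes wedge duality $\mathcal{H}(W)=\mathcal{H}(W')'\supseteq\mathcal{H}_I$. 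Your route is a bit more economical (it avoids wedge duality and the implicit symplectic-orthogonality $\mathcal{H}_I\subset\mathcal{H}_{I^c}'$); the paper's has the virtue that the very same complement-plus-duality manoeuvre is reused verbatim later in the proof of Theorem~\ref{AA0} for the interacting algebras.

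For cyclicity in part~$iii.)$ the paper's primary argument is an appeal to the abstract Theorem~5.6 of~\cite{BGL} (with a secondary pointer to the Borchers--Buchholz mechanism). You instead unpack the anti-locality statement: $k\perp_{\mathbb C}\mathcal{H}_I$ forces both $k$ and $\omega^{-1}k$ to be supported in $S^1\setminus I$, and real-analyticity of the Legendre kernel off the diagonal then forces $k=0$. This is more explicit and identifies precisely where the hard analysis sits; it is the content of the anti-locality results of Masuda and Verch that appear in the bibliography but are not actually invoked in the paper's own proof.
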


\begin{proof} 
We prove property $i.)$ first for the wedge $W_1$. Let $I_+$ be the open
half-circle $I_+ \doteq \{  x   \in S^1 \mid x_{2} >0 \}$. It has been shown in~\cite[Prop.~6.4.3]{BJM-1} that 
$\mathcal{H}_{I_+}$ is contained in $\mathcal{H}(W_1)$; see property $iv.)$ of Definition~A.7 in~\cite{BJM-1}.  
But $\mathcal{H}_{I_+}$ is invariant under the modular unitary group associated with $\mathcal{H}(W_1)$, 
namely $u(\Lambda_1(t))$, by the finite speed of light property~\eqref{finite-speed-of-light-1particle}. Further,
$\mathcal{H}_{I_+}$ is also a standard subspace. Takesaki's Theorem on standard subspaces 
(see,  \emph{e.g.}, \cite{LL}) then asserts that
	\begin{equation} 
		\label{HIHW}
		\mathcal{H}_{I_+}= \mathcal{H}(W_1) \; ,
	\end{equation}
as claimed. An alternative proof of \eqref{HIHW}  is provided by combining~\cite[Prop.~6.4.5]{BJM-1} (which 
provides a unitary map between the covariant one-particle space and the canonical one) 
with~\cite[Theorem~6.4.6]{BJM-1}, which discusses restrictions of this unitary map to localised regions.

Let now $I$ be an interval as in the proposition. As $\mathcal{O}_I$
is causally complete,  
	\[
		\bigcap_{ \mathcal{O}_I \subset W} W = \mathcal{O}_I = W (\alpha) \cap W (\beta)  
	\]
for some fixed $\alpha, \beta \in [0, 2\pi)$, where  
	\[
		W (\alpha) \doteq R_0(\alpha) W_1 \; , \quad \alpha
                \in [0, 2 \pi) \;  
	\]
denotes a wedge whose edges lies on $S^1$. Inspecting the definitions and \eqref{HIHW}, 
we find that   
	   \[
		\mathcal{H}_I  
		= \mathcal{H}_{R_0(\alpha) I_+} \cap
			\mathcal{H}_{R_0(\beta) I_+}
		= \mathcal{H} \bigl( W (\alpha)  \bigr) \cap
			\mathcal{H} \bigl(  W (\beta)  \bigr)  \; .  
	\]
As both $W (\alpha)$ and $W (\beta)$ are wedges which contain $\mathcal{O}_I$, we have 
	\[
		\mathcal{H} ( \mathcal{O}_I) \subseteq \mathcal{H}_I \; . 
	\]
Next, we assume that $W$ is an arbitrary wedge which contains~$\mathcal{O}_I$. The opposite wedge $W'$ 
of $W$ is, like any wedge, of the form $\Lambda^{(\beta)}(t) R_0 (\alpha) W_1$ for suitable $\alpha, \beta$ and~$t$.
As a consequence of finite speed of light\footnote{This 
property has been shown in~\cite[Prop.\ 6.5.5]{BJM-1} for $\Lambda_1(t)$, but also holds for $\Lambda^{(\beta)}(t)$ as
$u\big(R_0(\alpha)\big) \;  \mathcal{H}_I =  \left\{ h \in \mathcal{H} \mid
\operatorname{supp} \Re h \subset R_0(\alpha)I \, ,  \; \operatorname{supp}
\omega^{-1}\Im h \subset R_0(\alpha) I \right\}$ for all $\alpha \in [0, 2\pi)$.} \eqref{finite-speed-of-light-1particle} 
we have   
	\begin{align*}  
	  \mathcal{H} ( W') & =  u\bigl( \Lambda^{(\beta 
            )}(t)\bigr)
	  \mathcal{H}\bigl(W( \alpha )\bigr)
         \subset  \mathcal{H}_J  
        \end{align*}
with $J  = \Gamma\big(  \Lambda^{(\beta)}(t)R_0(\alpha) I_+ \big) \cap S^1 $,
where $\Gamma(M)$ is the
domain of dependence of a set~$M$, \emph{i.e.}, 
the union of the future $\Gamma^+ (M)$ and the past 
$\Gamma^-(M)$ of $M$. Note that 
\begin{itemize}
\item [$a.)$] $\Gamma\big( \Lambda^{(\beta )}(t)   R_0(\alpha) I_+ \big) = \Gamma (W') $; 
\item [$b.)$] $W'$ is space-like to $I$, since $W$ contains~$\mathcal{O}_I$. 
\end{itemize}
Hence $\Gamma(W')\cap S^1$ is in the interior $I^c \doteq S^1 \setminus \overline{I} $
of the complement of $I$ within $S^1$, and the same holds for $J$. Thus
	\[
		\mathcal{H} (  W' ) \subset \mathcal{H}_{I^c}.
	\]
Wedge duality now implies
	\[
		\mathcal{H} (  W ) = \mathcal{H} ( W' )'  
		 \supseteq  \mathcal{H}_I \; .  
        \]
This verifies \eqref{cauchy-localization}. 

Property $ii.)$. Additivity follows from the covariant formulation of the one-particle Hilbert space,
which uses the Fourier-Helgason transformation to define $\mathbb{R}$-linear subspaces 
associated to bounded space-time regions, see~\cite{BJM-1}. However, it can also 
be verified directly, exploring ideas of \cite[Sect. 5]{A2}. Using~\eqref{cauchy-localization}, 
the statement \eqref{additivity} is equivalent to 
the following one: for any open interval $I\subset I_+$,  
	\begin{equation}
	\label{additivity-2}
		\mathcal{H}_{I_+} =  \bigvee_{R_0(\alpha)I\subset I_+}
                        \mathcal{H}_{R_0(\alpha)I}   \; ,  
	\end{equation}
where $\vee$ denotes the closure of the $\mathbb{R}$-linear span in $\mathcal{H}$. 
Since the closure of~$I_+$ is compact, it is sufficient to prove additivity for two overlapping open 
intervals $I_1, I_2 \subset I\equiv I_1 \cup I_2$. Inspecting \eqref{H_I}, we see that we  
have to show that any $h \in \mathcal{H}_{I}$ can approximated by a sequence of functions  
$\{ h_n \}_{n \in \mathbb{N}}$ such that 
	\[
			h_n = h_n^{(1)} + h_n^{(2)} \;, 
	\]
with
	\[
			\operatorname{supp} \Re h_n^{(i)}  \subset I_i \; , 
			\quad
			\operatorname{supp} \Im \omega^{-1} h_n^{(i)}  \subset I_i \; , 
			\quad i = 1,2 \; . 
	\]
Since $L^2(I, r {\rm d} \psi)$ 
is dense in $\mathcal{H}_{I}$, we can choose $\Re h_n^{(i)} \in L^2(I_i, r {\rm d} \psi)$. Additivity for the 
real part is then a consequence of the additivity of the relevant $L^2$-spaces. Since $\omega^{-1}$ is 
an injective bounded operator on $\mathcal{H}_{I}$, it is bijective onto its image, and thus every 
$h \in \mathcal{H}_{I}$ is of the form 
	\[
		h = \omega ( \omega^{-1} h ) \; , 
	\]
where $g:=\omega^{-1} h \in \mathbb{H}^{1/2} (S^1)$ has support in $I$. 
Here $\mathbb{H}^{1/2} (S^1) \subset L^2(S^1, r {\rm d} \psi)$ is the closure 
of $C^\infty(S^1)$ with respect to the norm
	\[
			\| f \|_{\mathbb{H}^{1/2} (S^1)} := \sum_{k \in \mathbb{Z}} \widetilde{\omega} (k) | f_k |^2 \; . 
	\]
Note that the map $\omega \colon \mathbb{H}^{1/2} (S^1) \to \mathcal{H}$ is unitary. 
Now, any $g$ in $L^2(I, r {\rm d} \psi)$ can be decomposed in the form 
	\[
		 g = \chi_1 g + \chi_2 g \; , \qquad \chi_i \in C^\infty_0 (I_i) \; , \quad i =1,2 \; . 
	\]
The support properties are clearly satisfied. Hence additivity of the imaginary part $\Im h$ of $h$ follows, once
one has verified that the multiplication by a $C^\infty_0$-function defines a bounded operator 
on~$\mathbb{H}^{1/2} (S^1)$. This result is established in Lemma \ref{Lm:2.6} below.

Property $iii.)$, the standard property,  follows from abstract reasons (Theorem 5.6 of \cite{BGL},
which is applicable since we are dealing with representations of the principal or complementary series). 
A direct argument can also be given: 
For $\mathcal{O}=W_1$, the standard property is a general result about Tomita operators (see, e.g., \cite{RvD}). 
The first identity in \eqref{standard property} then follows 
by covariance and isotony (since every double cone is contined in some wedge).   

Adapting ideas of Borchers and Buchholz (see \cite[Lemma 3.2]{BoB}) to the one-particle space, 
one can show that the orthogonal complement of the complex linear span of~$\mathcal{H}_I$ is empty.
Hence, the second identity in \eqref{standard property} follows too.
\end{proof}

\begin{lemma}
\label{Lm:2.6} 
The multiplication of $\chi \in  C^\infty (S^1)$ on a vector of $\mathbb{H}^{1/2}(S^1)$ is a bounded  
operator in $\mathbb{H}^{1/2}(S^1)$.
\end{lemma}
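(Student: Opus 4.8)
The plan is to reduce the assertion to a comparison between $\mathbb{H}^{1/2}(S^1)$ and the standard fractional Sobolev space on the circle. First I would record the elementary two-sided estimate
\[
 c\,\bigl(1+|k|\bigr) \;\le\; \widetilde{\omega}(k) \;\le\; C\,\bigl(1+|k|\bigr) \;, \qquad k \in \mathbb{Z} \;,
\]
for suitable constants $0 < c \le C$. This follows from the properties of $\widetilde\omega$ collected right after the first Lemma: $\widetilde\omega$ is even, $\widetilde\omega(0)>0$, the map $\mathbb{R}^+\ni k \mapsto \widetilde\omega(k)$ is monotonically increasing, and $r\,\widetilde\omega(k)/|k| \to 1$ as $k\to\pm\infty$. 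Monotonicity together with strict positivity at $0$ gives the bounds for small $|k|$, while the asymptotics give them for large $|k|$. In particular the norm of $\mathbb{H}^{1/2}(S^1)$, i.e.\ $\|f\|^2_{\mathbb{H}^{1/2}(S^1)}=\sum_{k}\widetilde\omega(k)|f_k|^2$, is equivalent to the usual Sobolev norm $\sum_k (1+|k|)\,|f_k|^2$. Hence it suffices to prove that multiplication by $\chi$ is bounded with respect to the latter, and at this point one may simply quote the classical fact that multiplication by a $C^\infty$ function is bounded on every Sobolev space $H^s(S^1)$. For completeness I would nonetheless include the short Fourier-analytic argument that follows.

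The direct argument works on the level of Fourier coefficients. With $f=\sum_j f_j e^{ij\psi}$ and $\chi=\sum_m \chi_m e^{im\psi}$ one has $(\chi f)_k=\sum_j \chi_{k-j} f_j$, and smoothness of $\chi$ yields rapid decay of its Fourier coefficients; in particular $\|\chi\|_{\ell^1}:=\sum_m |\chi_m|$ and $A:=\sum_m |m|\,|\chi_m|$ are finite. Applying the Cauchy–Schwarz inequality to the convolution gives
\[
 \Bigl| \sum_j \chi_{k-j} f_j \Bigr|^2 \;\le\; \|\chi\|_{\ell^1}\,\sum_j |\chi_{k-j}|\,|f_j|^2 \;,
\]
so that, using the upper bound on $\widetilde\omega$ and interchanging the (absolutely convergent) summations,
\[
 \|\chi f\|^2_{\mathbb{H}^{1/2}(S^1)} \;\le\; C\,\|\chi\|_{\ell^1}\sum_j |f_j|^2 \sum_k \bigl(1+|k|\bigr)\,|\chi_{k-j}| \;.
\]
Substituting $m=k-j$ and using $|k|\le |m|+|j|$, the inner sum is at most $\sum_m \bigl(1+|m|+|j|\bigr)|\chi_m| \le \bigl(1+|j|\bigr)\|\chi\|_{\ell^1}+A \le C'\bigl(1+|j|\bigr)$. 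Invoking the lower bound $\widetilde\omega(j)\ge c\bigl(1+|j|\bigr)$ then yields $\|\chi f\|^2_{\mathbb{H}^{1/2}(S^1)}\le C''\,\|f\|^2_{\mathbb{H}^{1/2}(S^1)}$. Since trigonometric polynomials are dense in $\mathbb{H}^{1/2}(S^1)$, this extends to all of $\mathbb{H}^{1/2}(S^1)$, which is the claim.

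There is essentially no deep obstacle here: once the comparison $\widetilde\omega(k)\sim 1+|k|$ is in place, the estimate is routine. The only point requiring a little care is the uniformity of the \emph{lower} bound $\widetilde\omega(j)\ge c\bigl(1+|j|\bigr)$ all the way down to $j=0$ — this is where one genuinely uses that $\omega$ is strictly positive (so $\widetilde\omega(0)>0$) together with monotonicity, rather than only the asymptotics. I would also note in passing that the Cauchy–Schwarz step and the interchange of the two summations are legitimate since all sums involved are absolutely convergent (one may, if one prefers, first carry out the computation for $f$ a trigonometric polynomial and then pass to the limit).
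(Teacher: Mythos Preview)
Your argument is correct and follows essentially the same route as the paper's: both establish the two-sided linear growth $\widetilde\omega(k)\asymp 1+|k|$, pass to Fourier coefficients via the convolution formula $(\chi f)_k=\sum_j\chi_{k-j}f_j$, and then control the weight by the triangle inequality $|k|\le|k-j|+|j|$. The only real difference is that you make the Cauchy--Schwarz step on the convolution explicit (picking up an $\ell^1$-factor $\|\chi\|_{\ell^1}$ in the final constant), whereas the paper's passage from $\bigl|\sum_{k'}\chi_{k-k'}h_{k'}\bigr|^2$ to a double sum of moduli is written more tersely; your version is the more carefully justified of the two.
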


\begin{proof} Let $g$ be a vector in $\mathbb{H}^{1/2}(S^1)$, \emph{i.e.}, 
$\sum_{k \in \mathbb{Z}} \widetilde{\omega} (k) | g_k |^2 < \infty $, where 	
		\begin{equation} \label{eqFourierCoeff}
		g_k := \frac{1}{\sqrt{2\pi r}}\int_{S^1} r {\rm d} \psi \;  {\rm e}^{i k \psi} f (\psi)   
        \end{equation}
are the Fourier coefficients~\eqref{eqFourierCoeff} of $g$. We have to show that there is a 
constant $c>0$ such that $\|\chi g\|_{\mathbb{H}^{1/2}(S^1)}\leq c\,\|g\|_{\mathbb{H}^{1/2}(S^1)}$. 
Note that  $\chi \in C^\infty (S^1)$ implies that, for all $\ell \in \mathbb{N}$,  
	\begin{equation}
	\label{star-star}
		(1+ |n|)^\ell \chi_k \to 0 \qquad \text{as} \qquad |k|
        \to \infty \; .  
	\end{equation}
In particular, $\chi \in \mathbb{H}^{1/2}(S^1)$. 
Now using $(\chi\cdot h)_k=\sum_{k'} \chi_{k-k'} h_{k'}$, one gets
	\begin{align*}
	  \| \chi h \|^2_{\mathbb{H}^{1/2}(S^1)} 
                 & = \sum_{k\in \mathbb{Z}} \Big|\widetilde\omega(k)^{1/2}
                 \sum_{k' \in \mathbb{Z}}\chi_{k-k'}  h_{k'} 	\Big|^2 	 
		 \\
		 & =   \sum_{k, k' \in \mathbb{Z}} \widetilde{\omega}
                 (k+k')  \, \bigl| \chi_{k} h_{k'}  \bigr|^2 \; . 		 
	\end{align*}
The facts that $\widetilde{\omega}$ is monotonically increasing and $\widetilde{\omega}(k)|k|^{-1}\to r^{-1}$ for large $k$ imply that there are positive constants\footnote{If (as we expect) 
the map $k \mapsto \widetilde{\omega}(k)$ is convex, than one can set $a=b=1$.} $a$, $b$ such that 
$|k| \leq b \widetilde\omega(k)$ and $\widetilde\omega(k) \leq \widetilde\omega(0) + a |k|$. 
Thus,
        \[
        \widetilde\omega(0) \le | \widetilde \omega(k+k') | \leq \widetilde\omega(0) +  a|k|+   a|k' |
        \qquad \forall k, k' \in \mathbb{Z} 
        \] 
and 
	\begin{align} 
	\label{eqTriangle} 
		\| \chi h \|^2_{\mathbb{H}^{1/2}(S^1)} 
		& \leq 
		\widetilde\omega(0) \| \chi \|^2_{L^2} \| h \|^2_{L^2}
		+ 
		a \| h \|^2_{L^2} \left( \sum_{k\in \mathbb{Z}} | k|   \,|\chi_{k} |^2 \right)  
		\nonumber
		\\
		& \qquad +
		a \| \chi \|^2_{L^2} \left( \sum_{k'\in \mathbb{Z}} | k'|   \, | h_{k'} |^2 \right)  
		\nonumber		
		\\ 
		& \leq 
		\left( (1+ab) \| \chi \|^2_{L^2} +  \frac{ab}{\widetilde\omega(0)} \|\chi  \|^2_{\mathbb{H}^{1/2}(S^1)}
		\right) \| h\|^2_{\mathbb{H}^{1/2}(S^1)} 	\; 	.
	\end{align}
\color{black}
In the second inequality we used $\widetilde \omega(0) \| \, . \|_{L^2} \le \| \, . \,   \|^2_{\mathbb{H}^{1/2}(S^1)}$.  
\end{proof}

\section{Nets of Local Algebras}
\label{sec:4}

The bosonic \emph{Fock space} $\mathcal{F}  = \mathbb{\Gamma}(\mathcal{H})$ over $\mathcal{H}$ is 
defined as the direct sum of the $n$-particle spaces:
	\[
		\mathbb{\Gamma}(\mathcal{H})  \doteq\oplus_{n= 0}^{\infty}  \;  \mathcal{H}^{\otimes_s^n} \;  , 
		\qquad  \mathcal{H}^{\otimes_s^0} \doteq \mathbb{C} \; ,  
	\]
with $\mathcal{H}^{\otimes_s^n}$ the n-fold totally symmetric tensor product $\otimes_s$ 
of~$\mathcal{H}$ with itself.  The {\em coherent vectors} 
	\[
		\mathbb{\Gamma}( h) \doteq \oplus_{n=0}^\infty \frac{1}{\sqrt{n!}} \underbrace{h \otimes_s \cdots \otimes_s h}_{n-times}  		  
	\]
form a total set in $\mathcal{F}$. The vector $\Omega_\circ= \mathbb{\Gamma}(0)$ is called the
Fock vacuum. One can also define\footnote{To the best of our knowledge, this formulation first 
appeared in \cite{Gui}. Our presentation follows closely \cite{Guido}.} second quantized operators: 

\begin{lemma}
Let $A$ be a closed, densely defined linear operator on $\mathcal{H}$ with domain~${\mathscr D} (A)$. Then
	\[
		\mathbb{\Gamma} (A) \colon  \mathcal{F} \to \mathcal{F} 
	\]
is the closure of the linear operator acting on the linear combinations of coherent vectors with 
exponent in~${\mathscr D} (A)$ such that
	\[
		\mathbb{\Gamma} (A) \mathbb{\Gamma} (h) = \mathbb{\Gamma} (A h) \; .
	\]
This exponentiation preserves self-adjointness, positivity and unitarity.  
\end{lemma}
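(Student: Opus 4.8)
The plan is to establish, in order: (1) that $\mathbb{\Gamma}(A)$ is well defined and closable on the algebraic span $\mathcal{D}_0$ of $\{\mathbb{\Gamma}(h)\mid h\in{\mathscr D}(A)\}$; (2) preservation of unitarity, together with the functoriality $\mathbb{\Gamma}(VAV^*)=\mathbb{\Gamma}(V)\mathbb{\Gamma}(A)\mathbb{\Gamma}(V)^*$ for unitaries $V$; and (3) preservation of self-adjointness and positivity, reduced by the spectral theorem to a multiplication operator on Fock space. Throughout I use the elementary identity $\langle\mathbb{\Gamma}(h),\mathbb{\Gamma}(g)\rangle_{\mathcal F}={\rm e}^{\langle h,g\rangle_{\mathcal H}}$, immediate from the definition of $\mathbb{\Gamma}(h)$.

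For (1): coherent vectors $\mathbb{\Gamma}(h_1),\dots,\mathbb{\Gamma}(h_m)$ with pairwise distinct exponents are linearly independent (their Gram matrix $\bigl({\rm e}^{\langle h_i,h_j\rangle}\bigr)_{i,j}$ is invertible), so $\mathbb{\Gamma}(A)\mathbb{\Gamma}(h)=\mathbb{\Gamma}(Ah)$ extends by linearity to an unambiguous operator on the dense subspace $\mathcal{D}_0$. Closability: since $A$ closed and densely defined implies $A^*$ densely defined, one has, for $h\in{\mathscr D}(A)$ and $g\in{\mathscr D}(A^*)$,
\[
\langle \mathbb{\Gamma}(A)\mathbb{\Gamma}(h),\mathbb{\Gamma}(g)\rangle_{\mathcal F}={\rm e}^{\langle Ah,g\rangle_{\mathcal H}}={\rm e}^{\langle h,A^*g\rangle_{\mathcal H}}=\langle \mathbb{\Gamma}(h),\mathbb{\Gamma}(A^*)\mathbb{\Gamma}(g)\rangle_{\mathcal F}\;.
\]
Thus $\mathbb{\Gamma}(A^*)|_{\mathcal D_0}\subset(\mathbb{\Gamma}(A)|_{\mathcal D_0})^*$; since $\{\mathbb{\Gamma}(g)\mid g\in{\mathscr D}(A^*)\}$ is total in $\mathcal F$ — coherent vectors over a dense subspace are total, as one sees by differentiating $t\mapsto\mathbb{\Gamma}(tg)$ at $t=0$ and using polarization — the adjoint is densely defined and $\mathbb{\Gamma}(A)|_{\mathcal D_0}$ is closable; we write $\mathbb{\Gamma}(A)$ for its closure.

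For (2): if $U$ is unitary then ${\mathscr D}(U)=\mathcal H$ and $\langle\mathbb{\Gamma}(Uh),\mathbb{\Gamma}(Ug)\rangle_{\mathcal F}={\rm e}^{\langle Uh,Ug\rangle_{\mathcal H}}={\rm e}^{\langle h,g\rangle_{\mathcal H}}=\langle\mathbb{\Gamma}(h),\mathbb{\Gamma}(g)\rangle_{\mathcal F}$, so $\mathbb{\Gamma}(U)$ is isometric on the total set of all coherent vectors and its closure is a bounded isometry. The composition law $\mathbb{\Gamma}(B)\mathbb{\Gamma}(C)\mathbb{\Gamma}(h)=\mathbb{\Gamma}(BCh)=\mathbb{\Gamma}(BC)\mathbb{\Gamma}(h)$ on coherent vectors (with $B,C$ bounded) gives $\mathbb{\Gamma}(U)\mathbb{\Gamma}(U^*)=\mathbb{\Gamma}(U^*)\mathbb{\Gamma}(U)=\mathbb{\Gamma}(\mathbb{1})=\mathbb{1}$ on the dense span of coherent vectors, hence on $\mathcal F$ by boundedness; so $\mathbb{\Gamma}(U)$ is unitary with $\mathbb{\Gamma}(U)^{-1}=\mathbb{\Gamma}(U^*)=\mathbb{\Gamma}(U)^*$. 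The same computation yields $\mathbb{\Gamma}(VAV^*)=\mathbb{\Gamma}(V)\mathbb{\Gamma}(A)\mathbb{\Gamma}(V)^*$ at the level of closures, since conjugation by the unitary $\mathbb{\Gamma}(V)$ maps $\mathcal D_0$ onto $\mathrm{span}\{\mathbb{\Gamma}(g)\mid g\in{\mathscr D}(VAV^*)\}$, intertwines the algebraic operators, and commutes with taking closures.

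For (3): by the spectral theorem a self-adjoint $A$ is unitarily equivalent, via some $V\colon\mathcal H\to L^2(X,\mu)$, to multiplication $M_f$ by a real measurable $f$, with $f\ge 0$ $\mu$-a.e.\ when $A\ge 0$; by (2) it suffices to show $\mathbb{\Gamma}(M_f)$ is self-adjoint (resp.\ positive) on $\mathbb{\Gamma}\bigl(L^2(X,\mu)\bigr)=\bigoplus_n L^2(X^n,\mu^{\otimes n})_{\mathrm{sym}}$. On the $n$-particle sector $\mathbb{\Gamma}(M_f)$ is multiplication by the symbol $\Phi_n(x_1,\dots,x_n)=\prod_{j=1}^n f(x_j)$, real (resp.\ nonnegative) a.e., and the \emph{maximal} multiplication operator $M_\Phi$ attached to $(\Phi_n)_n$ on this direct sum is therefore self-adjoint (resp.\ positive self-adjoint). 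The remaining — and only genuinely non-routine — point is to identify this $M_\Phi$ with the closure $\mathbb{\Gamma}(M_f)$, i.e.\ to prove that the coherent vectors $\mathbb{\Gamma}(g)$ with $g\in{\mathscr D}(M_f)$ form a core for $M_\Phi$. (A softer route via analytic vectors is not available: $\|\mathbb{\Gamma}(A)^k\mathbb{\Gamma}(h)\|={\rm e}^{\|A^kh\|^2/2}$ grows too fast for $\mathbb{\Gamma}(h)$ to be analytic for $\mathbb{\Gamma}(A)$.) For the core property I would restrict to $g$ bounded with support of finite $\mu$-measure — such $g$ lie in ${\mathscr D}(M_f)$ and form a core for $M_f$ — and argue that finite linear combinations of the corresponding $\mathbb{\Gamma}(g)$ are graph-norm dense in ${\mathscr D}(M_\Phi)$: a general vector in ${\mathscr D}(M_\Phi)$ is approximated in graph norm by its restrictions to the sets $\{|\Phi_n|\le N\}$ of finite measure, and on each such sector the symbol is bounded, so $L^2$-approximation by linear combinations of tensor products $g^{\otimes n}$ (possible by polarization) upgrades to graph-norm approximation. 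This establishes $\mathbb{\Gamma}(M_f)=M_\Phi$, hence self-adjointness and positivity of $\mathbb{\Gamma}(A)$, finishing the proof.
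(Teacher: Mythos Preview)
The paper does not actually prove this lemma; it is stated as a known fact with references to Guichardet and Guido (the footnote just before the lemma), so there is no ``paper's own proof'' to compare against. Your argument follows standard lines, and parts (1) and (2) are correct.

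In part (3), however, your core argument has a genuine gap. After restricting $\psi\in\mathscr{D}(M_\Phi)$ to $\{|\Phi_n|\le N\}$ you approximate in $L^2$ by linear combinations of $g^{\otimes n}$ and assert that this ``upgrades to graph-norm approximation'' because the symbol is bounded on the restriction set. But your approximants $g^{\otimes n}$ (with $g$ merely bounded of finite-measure support) are \emph{not} supported in $\{|\Phi_n|\le N\}$, so $M_\Phi$ applied to them is not controlled and $L^2$-convergence does not yield graph-norm convergence. (The set $\{|\Phi_n|\le N\}$ need not have finite measure either, contrary to what you write.) The fix is easy: restrict instead to the product set $E_M^n$ with $E_M=\{|f|\le M\}$, which still converges in graph norm as $M\to\infty$ by dominated convergence, and then approximate in $L^2(E_M^n)_{\mathrm{sym}}$ by symmetric tensors of $g\in L^2(E_M)$; now both target and approximants are supported where $|\Phi_n|\le M^n$, and the upgrade is legitimate. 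You should also make explicit the passage from the $n$-particle vectors $g^{\otimes n}$ back to linear combinations of coherent vectors $\mathbb{\Gamma}(g)$ in the \emph{graph} norm of $M_\Phi$: the analyticity of $t\mapsto\mathbb{\Gamma}(tg)$ together with that of $t\mapsto M_\Phi\mathbb{\Gamma}(tg)=\mathbb{\Gamma}(tfg)$ (valid since $fg\in L^2$) gives this via iterated difference quotients, but it is a separate step from the mere totality argument you invoked in (1).
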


For $h, g \in \mathcal{H}$, the relations
	\begin{align*} 
		V (h) V (g) 
		& = {\rm e}^{- i \, \Im \langle h ,  g \rangle } V (h+g) \; , 
		\\
		V (h) \Omega_\circ & ={\rm e}^{-\frac{1}{2} \| h \|^2}  \mathbb{\Gamma}( i  h) \; , 
	\end{align*}
define unitary operators, called the {\em Weyl operators}. They satisfy 
	\[ 
		V^*(h) = V (-h) 
		\quad \hbox{and} \quad V (0)= \mathbb{1} \; . 
	\]
We use the Weyl operators to associate a von Neumann algebras acting on the Fock space 
$\mathcal{F}$ to the wedge $W_1$: let~${\mathscr A}_\circ ( W_1 )$ denote the von Neumann 
algebra generated by the Weyl operators
	\[
		\bigl\{ V (h) \mid h \in \mathcal{H}(W_1) \bigr\} \; .  
	\]
Given a representation $U(\Lambda)$, $\Lambda \in SO_0 (1,2)$, of the
Lorentz group acting on Fock space which acts kinematically on the 
Cauchy surface $S^1$ (\emph{i.e.}, the rotations leaving $S^1$ invariant are 
given by the pull-back acting on one-particle wave functions), 
we can define von Neumann algebras associated to arbitrary
convex, causally complete,  bounded regions. We proceed in steps, repeating the ideas which lie
behind Definition~\ref{def-h-loc} and starting from the free field algebra  
of the wedge $W_1$. Note that $\Lambda W_1\subset
W_1\Leftrightarrow \Lambda=\Lambda_1(t)$ for some $t \in \mathbb{R}$.  

\begin{definition}
\label{locoinfield}
Given a unitary representation $\Lambda \mapsto U(\Lambda)$ of the
Lorentz group $SO_0(1,2)$ acting on Fock space $\mathcal{F}$ and satisfying the condition
	\begin{equation}
	\label{u-a-0}
		U \bigl( \Lambda_1(t) \bigr) {\mathscr A}_0 (W_1)
		U \bigl( \Lambda_1(t) \bigr)^{-1} = {\mathscr A}_0(W_1) \; , 
		\qquad t \in \mathbb{R} \; ,
	\end{equation}                 
we define the following von Neumann algebras:  
\begin{itemize}
\item[$ i.)$] For an arbitrary wedge $W= \Lambda W_1$, $\Lambda \in
  SO_0(1, 2)$, we set 
	 \begin{equation}
	 \label{a-w-U}
	 	{\mathscr A} ( W ) 
		\doteq U ( \Lambda) {\mathscr A}_\circ
		 \bigl(  W_1 \bigr) U ( \Lambda)^{-1} \;  . 	
	\end{equation}
\item[$ ii.)$] 
For an arbitrary bounded, causally complete, convex region (these are the de Sitter analogs of 
the double cones) ${\mathcal O} \subset \mathbb{dS}$, we set   
	\begin{equation} 
		\label{localfield}
		 {\mathscr A} ({\mathcal O}) 
		\doteq \bigcap_{W \supset {\mathcal O} } {\mathscr A}  \bigl( W \bigr) \; . 
	\end{equation}
\end{itemize}
The inclusion preserving map 
	\[
		{\mathcal O} \mapsto {\mathscr A} ({\mathcal O}) 
	\]
is called, in a slight abuse\footnote{Technically speaking, this family of algebras is not a ``net'', since 
in de Sitter space not every pair of double cones is contained in a double cone.} of the term,
the net of local von Neumann algebras for the bosonic  field on the de Sitter space $\mathbb{dS}$ 
transforming under $U$.   
\end{definition}

\goodbreak
\begin{remarks}
\label{IV.1} 
\quad
\begin{itemize}
\item[$i.)$] In case $U \equiv U_\circ$,  
	\[
		U_\circ (\Lambda) \doteq \mathbb{\Gamma} ( u ( \Lambda)) \; , \qquad 
		\Lambda \in SO_0 (1,2) \; , 
	\]
we will denote the generator of one-parameter unitary 
group $ t \mapsto U_\circ \bigl( \Lambda_1(t) \bigr)$ by~$L_\circ$
and the local algebra  by~${\mathscr A}_\circ ({\mathcal O})$. It follows from 
a result by Araki \cite[Theorem 1]{A1} and Proposition \ref{Prop-ii.4} $ii.)$ that 
	\[
		{\mathscr A}_\circ ({\mathcal O}_I) = \left\{ V(h)   \mid h \in \mathcal{H} , \;   
		{\rm supp\,} \left( \Re h,    \,  \omega^{-1}\Im h \right) \subset I \times I \right\} '' ; 
	\]
just as one might have expected. 
\item[$ii.)$] 
For the $\mathscr{P}(\varphi)_2$ model on the de Sitter space, 
the representation $U$ of $SO_0(1,2)$ in Definition~\ref{locoinfield}  
is generated by the rotations $U \bigl( R_0(\alpha) \bigr)$, $\alpha \in [0, 2 \pi)$, and the boosts 
$U \bigl( \Lambda_1(t) \bigr)$, $t \in \mathbb{R}$. The latter are defined as follows:  
\begin{itemize}
\item [$a.)$] The rotations  leaving the Cauchy surface $S^1$ invariant are the free ones, 
  \begin{equation} \label{eqRotFree}
    U \bigl( R_0(\alpha) \bigr) = \mathbb{\Gamma} (u(R_0(\alpha)) \; , \quad\alpha
    \in [0, 2 \pi) \; ; 
  \end{equation}
\item [$b.)$]  
The generator of the one-parameter unitary group $ t \mapsto U \bigl( \Lambda_1(t) \bigr)$
can be expressed in terms of canonical fields and canonical momenta (see \cite{BJM-1} for details): 
	\begin{equation} 
  	\label{L-int}
			\qquad 
			\; 
			L =  L_\circ + 
			\lim_{\epsilon \to 0}
			\int_{S^1} \,  r  \cos  \psi    \, {\rm d} \psi  \;   {:} \mathscr{P} \bigl(\varphi(\delta_\epsilon ( \, . \, -\psi))\bigr) {:} \;  ,  
	\end{equation} 
where $\mathscr{P}$ is a real-valued polynomial, bounded from below, $\varphi (h)$ is the generator of the 
one-parameter unitary group $s \mapsto V(sh)$, and $\delta_\epsilon$ approximates the Dirac 
delta function as $\epsilon \to 0$. As usual, the $\, {:} \;  {:} \, $  indicates normal ordering. 
\end{itemize}
\item [$iii.)$] The representations presented in $i.)$ and $ii.)$ extend to representations of $O(1,2)$ by adding the reflections 
$\mathbb{\Gamma} ( u ( P_1))$ and $\mathbb{\Gamma} (u(T))$. Note that in the interacting case $ii.)$ the reflection 
	\[
		U (\Theta_W) \doteq   U (\Lambda)\mathbb{\Gamma} \bigl( u(\Theta_W) \bigr)U^{-1}(\Lambda) 
	\]
associated to a wedge $W = \Lambda W_1$, $\Lambda \in SO_0(1,2)$, whose edges do \emph{not} lie on the Cauchy surface $S^1$
will in general \emph{not} coincide with $\mathbb{\Gamma} (u(\Theta_W))$.
\end{itemize}
\end{remarks}

In the general case, we will need a criterium for the representation of the Lorentz group 
which ensures that the intersection in \eqref{localfield} is not trivial. 
A sufficient condition is the following: 

\begin{definition}
\label{def:4.2}
Assume the von Neumann algebras are defined as in Definition \ref{locoinfield}.
The net of local algebras is said to satisfy \emph{finite speed of light}, if
for any wedge $W$, the algebra $\mathscr{A} (W) $ is contained in the time-zero Weyl algebra
	\begin{equation}
		\left\{ V(h)  \mid  h \in \mathcal{H} \, ; \;  
		\operatorname{supp} \Re h \subset J \, , \; 
		\operatorname{supp} \omega^{-1}\Im h  \subset J  \right\} '' ,
		\label{e6.1ef}
	\end{equation}
where $J = \Gamma (W) \cap S^1$.
\end{definition}

\begin{remarks}
\quad
\begin{itemize}
\item[$i.)$] 
If we associate von Neumann algebras to intervals $I$ by setting
	\[
		\mathscr{R}( I ) \doteq \mathscr{A} \bigl( {\mathcal O}_I \bigr) \; , 
		\qquad \mathscr{R}_\circ ( I ) \doteq \mathscr{A}_\circ \bigl( {\mathcal O}_I \bigr) \; , 
		\qquad I \subset S^1 \; , 
	\]
then \eqref{e6.1ef} may be formulated in the same way as finite speed of light
was originally defined by Glimm and Jaffe in \cite[see Theorem 6.7 for the free case and 
Theorem 8.1 for the interacting case]{GJ}, namely by requesting that 
	\[
		U (\Lambda) \mathscr{R}(I)U^{-1} (\Lambda) \subset {\mathscr{R}_\circ} \bigl(  \Gamma (\Lambda I ) \cap S^1 \bigr)\; . 
	\]
Of course, on Minkowski space one would use a time translations by some $|t|< \delta$ instead of the boosts, 
and then $\Gamma (\Lambda I ) \cap S^1$ would just be equal to $I + [-\delta , \delta]$. 
\item[$ii.)$] 
For the $\mathscr{P}(\varphi)_2$ model on de Sitter space, property~\eqref{e6.1ef}, 
which encodes finite speed of light,  was established in Theorem 10.1.1 in \cite{BJM-1}. 
As in \cite[Theorem 8.1]{GJ} the key property, which implies finite speed of light, is the additivity 
of the integral in the second term in \eqref{L-int}.
\end{itemize}
\end{remarks}

We can now state a key result of our investigation. 

\begin{theorem} \label{AA0}
Assume the net of local algebras satisfies \emph{finite speed of light}  (in the sense of 
Definition \ref{def:4.2}). Then the local algebras associated to  an interval $I \subset S^1$ on the Cauchy surface
coincide with those of the free theory, \emph{i.e.},
	\[
		\mathscr{A} ({\mathcal O}_I) = \mathscr{A}_\circ ({\mathcal O}_I) \; , 
		\qquad I \subset S^1 \; . 
	\]
\end{theorem}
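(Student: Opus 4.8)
The plan is to establish the two inclusions $\mathscr{A}(\mathcal{O}_I)\subseteq\mathscr{A}_\circ(\mathcal{O}_I)$ and $\mathscr{A}_\circ(\mathcal{O}_I)\subseteq\mathscr{A}(\mathcal{O}_I)$ separately; only the second will use finite speed of light. We may assume $|I|\le\pi r$, reducing the general case by additivity \eqref{additivity} together with its interacting analogue. Two preliminary facts should be isolated first, both consequences of the standing hypothesis that $U$ acts kinematically on the Cauchy surface. Since $U\bigl(R_0(\alpha)\bigr)=\mathbb{\Gamma}\bigl(u(R_0(\alpha))\bigr)=U_\circ\bigl(R_0(\alpha)\bigr)$, the defining formula \eqref{a-w-U}, applied to both $U$ and $U_\circ$, gives $\mathscr{A}\bigl(W(\alpha)\bigr)=\mathscr{A}_\circ\bigl(W(\alpha)\bigr)=\bigl\{V(h)\mid h\in\mathcal{H}(W(\alpha))\bigr\}''$ for every wedge $W(\alpha)=R_0(\alpha)W_1$ with edges on $S^1$. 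Combining this identity, applied to $W_1$ and to $W_1'=R_0(\pi)W_1$, with free-field wedge duality $\mathscr{A}_\circ(W_1)'=\mathscr{A}_\circ(W_1')$ (a consequence of Proposition~\ref{Prop-ii.4}\,$i.)$ and Araki's second-quantisation functor~\cite{A1}) and with covariance, one obtains wedge duality for the interacting net, $\mathscr{A}(W)'=\mathscr{A}(W')$ for every wedge $W$: indeed $\mathscr{A}(W_1)'=\mathscr{A}_\circ(W_1)'=\mathscr{A}_\circ(W_1')=\mathscr{A}(W_1')$, and conjugating by $U(\Lambda)$ and using $(\Lambda W_1)'=\Lambda W_1'$ gives the general case.

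For the first inclusion, recall from the proof of \eqref{cauchy-localization} that causal completeness of $\mathcal{O}_I$ provides wedges $W(\alpha),W(\beta)$ with edges on $S^1$ satisfying $\mathcal{O}_I=W(\alpha)\cap W(\beta)$ and $\mathcal{H}(W(\alpha))\cap\mathcal{H}(W(\beta))=\mathcal{H}_I=\mathcal{H}(\mathcal{O}_I)$. Using the first preliminary fact and then the lattice property of the free net over $S^1$ (Araki~\cite{A1}, the input underlying Remarks~\ref{IV.1}\,$i.)$),
\[
   \mathscr{A}(\mathcal{O}_I)\subseteq\mathscr{A}\bigl(W(\alpha)\bigr)\cap\mathscr{A}\bigl(W(\beta)\bigr)
   =\bigl\{V(h)\mid h\in\mathcal{H}_I\bigr\}''=\mathscr{A}_\circ(\mathcal{O}_I)\,,
\]
the middle equality because $\mathcal{H}(W(\alpha))\cap\mathcal{H}(W(\beta))=\mathcal{H}_I$, the last by Remarks~\ref{IV.1}\,$i.)$. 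This half does \emph{not} invoke finite speed of light.

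For the reverse inclusion, fix an \emph{arbitrary} wedge $W\supseteq\mathcal{O}_I$; crucially this ranges over wedges whose edges need \emph{not} lie on $S^1$, and it is exactly these that finite speed of light controls. The opposite wedge $W'$ is space-like to $\mathcal{O}_I$, so, by the geometric argument in items~$a.)$ and~$b.)$ of the proof of \eqref{cauchy-localization}, the interval $J:=\Gamma(W')\cap S^1$ lies in the open complementary interval $I^c=S^1\setminus\overline{I}$. Finite speed of light (Definition~\ref{def:4.2}), together with monotonicity in $J$ and \eqref{H_I}, then gives $\mathscr{A}(W')\subseteq\bigl\{V(h)\mid h\in\mathcal{H}_{I^c}\bigr\}''$. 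Passing to commutants, using the interacting wedge duality from the first step, and using one-particle microcausality — $\Im\langle h,g\rangle_{\mathcal{H}}=0$ whenever the Cauchy data of $h$ and of $g$ are supported in the disjoint intervals $I$ and $I^c$, so that $\bigl\{V(h)\mid h\in\mathcal{H}_I\bigr\}''$ commutes with $\bigl\{V(h)\mid h\in\mathcal{H}_{I^c}\bigr\}''$ (cf.\ Proposition~\ref{Prop-ii.4}\,$iii.)$) — one gets
\[
   \mathscr{A}(W)=\mathscr{A}(W')'\supseteq\bigl(\bigl\{V(h)\mid h\in\mathcal{H}_{I^c}\bigr\}''\bigr)'\supseteq\bigl\{V(h)\mid h\in\mathcal{H}_I\bigr\}''=\mathscr{A}_\circ(\mathcal{O}_I)\,.
\]
As $W\supseteq\mathcal{O}_I$ was arbitrary, intersecting over all such $W$ yields $\mathscr{A}(\mathcal{O}_I)=\bigcap_{W\supseteq\mathcal{O}_I}\mathscr{A}(W)\supseteq\mathscr{A}_\circ(\mathcal{O}_I)$, which together with the first inclusion proves the claim.

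The conceptual heart — and where I expect the real work to lie — is the reverse inclusion: finite speed of light is precisely what tames the ``tilted'' wedges appearing in the intersection \eqref{localfield}, for which the kinematical coincidence of the first step is unavailable; without it, $\mathscr{A}(\mathcal{O}_I)$ could a priori be strictly smaller than $\mathscr{A}_\circ(\mathcal{O}_I)$. Two further points deserve care: the clean derivation of wedge duality for the interacting net (which rests on the rotations acting kinematically on $S^1$ and on free-field wedge duality), and the reduction, for intervals with $|I|>\pi r$, to the case $|I|\le\pi r$ via additivity.
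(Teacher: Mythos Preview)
Your proof is correct and follows essentially the same route as the paper's: the paper, too, reduces to showing that for every wedge $W\supseteq\mathcal{O}_I$ one has $\mathscr{A}(W')\subset\{V(h)\mid h\in\mathcal{H}_{I^c}\}''$ via finite speed of light, and then invokes duality to conclude $\mathscr{A}(W)\supset\mathscr{A}_\circ(\mathcal{O}_I)$. Your write-up is more explicit than the paper's terse argument---in particular you spell out the first inclusion (which the paper leaves to the reference to the one-particle analogue), the derivation of interacting wedge duality from the kinematical rotations and free-field duality, and the observation that only the reverse inclusion genuinely requires the interacting finite-speed-of-light hypothesis---but the logical skeleton is the same.
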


\begin{proof} The proof follows the ideas exposed in the proof of Proposition \ref{Prop-ii.4} $ii.)$. Thus the key step is to 
show that for any wedge $W$ which contains ${\mathcal O}_I$, we have 
	\[
		\mathscr{A}  (W')  \subset \left\{ V(h) \mid h \in \mathcal{H} , \; 
		\operatorname{supp} \Re h \subset I^c \, , \; 
		\operatorname{supp}  \omega^{-1}\Im h \subset I^c  \right\} '' ,
	\]
where $I^c \doteq S^1 \setminus \overline{I} $. As the edges of $W$ are necessarily space- or light-like 
to~${\mathcal O}_I$, this inclusion follows from finite speed of light as expressed in \eqref{e6.1ef}. By duality, 
	\[
		\mathscr{A}  (W)  \supset \underbrace{\left\{ V(h)  \mid h \in \mathcal{H} , \;  
		\operatorname{supp} \Re h \subset I \, , \; 
		\operatorname{supp} \omega^{-1}\Im h \subset I \color{black} \right\} ''}_{ = \mathscr{A}_\circ ( {\mathcal O}_I) } \, , 
	\]
whenever $W$ includes ${\mathcal O}_I$. 
\end{proof}

\goodbreak

\begin{remark}
The circle $S^1$, which we use to identify the free field and the interacting field, could be replaced by 
any space-like geodesic $\Lambda S^1$, $\Lambda \in SO_0(1,2)$. The Fock space simply carries two (in fact, 
infinitely many if one just varies the coupling constants) nets of local algebras, namely 
$\mathcal{O} \mapsto {\mathcal A}_\circ (\mathcal{O})$ and $\mathcal{O} \mapsto {\mathcal A} (\mathcal{O})$, 
and one may identify them on any of the space-like geodesic $\Lambda S^1$, $\Lambda \in SO_0(1,2)$.
\end{remark}

\section{The Haag-Kastler Axioms}

Assume we have defined a net $ {\mathcal O} \mapsto \mathscr{A} ({\mathcal O}) $ of local algebras 
according to Definition \ref{locoinfield}, with the rotations $R_0(\alpha)$, leaving the Cauchy surface $S^1$ invariant, 
implemented by the kinematic representation $ U \bigl( R_0(\alpha) \bigr) 
= \mathbb{\Gamma} (u(R_0(\alpha))$, $\alpha \in [0, 2 \pi)$,  
and the boosts $\Lambda_1(t)$, leaving the wedge $W_1$ invariant, implemented by the modular group for the pair
$\bigl({\mathcal A}_\circ (W_1), \Omega\bigr)$, with $\Omega$ a cyclic and separating for~${\mathcal A}_\circ (W_1)$.

\begin{theorem} 
\label{th:6}
In case the net $ {\mathcal O} \mapsto \mathscr{A} ({\mathcal O}) $ respects  \emph{finite speed of light} 
in the sense of Definition~\ref{def:4.2},  it will also satisfy the following Haag-Kastler axioms:
\begin{enumerate}
\item [$i.)$]  {\em (Isotony).} The local algebras satisfy 
	\[
		\mathscr{A} ({\mathcal O}_1) \subset \mathscr{A} ({\mathcal O}_2)  
		\quad \hbox{if} \quad {\mathcal O}_1 \subset {\mathcal O}_2  \; . 
	\]
Here ${\mathcal O}_1$ and ${\mathcal O}_2$ are either double cones or wedges (but the result extends to
arbitrary regions once \eqref{Additivity} has been established). 
\item [$ii.)$]  {\em (Locality).} The local algebras satisfy 
	\[
		\mathscr{A} ({\mathcal O}_1) \subset \mathscr{A} ({\mathcal O}_2)' 
		\quad \hbox{if} \quad {\mathcal O}_1 \subset {\mathcal O}_2 ' \; . 
	\]
Here ${\mathcal O}'$ denotes the space-like complement of 
${\mathcal O}$ in $\mathbb{dS}$ and $\mathscr{A} ({\mathcal O})'$ is the commutant 
of $\mathscr{A} ({\mathcal O})$ in $\mathscr{B}(\mathcal{F})$. 
\item [$iii.)$] {\em (Covariance).}
The representation $U \colon \Lambda \mapsto U ( \Lambda) $ acts geometrically, \emph{i.e.}, 
	\[
		U ( \Lambda)  \mathscr{A} ({\mathcal O}) U ( \Lambda)^{-1} 
		= \mathscr{A} (\Lambda {\mathcal O} )  \; , \qquad \Lambda  \in SO_0 (1,2) \; .  
	\]
\item [$iv.)$] {\em (Existence and Uniqueness of the Vacuum \cite{BoB}).} 
There exists a unique (up to a phase\footnote{The phase is uniquely fixed, if one insists 
that the vector $\Omega$ lies in the natural positive 
cone~$\mathcal{P}^\natural (\mathcal{A}_\circ (W_1), \Omega_\circ)$.}) 
unit vector in $\mathcal{F}$, namely $\Omega$, which 
\begin{enumerate}
\item [a.)] is  invariant under the action of $U  (SO_0(1,2))$; 
\item [b.)] satisfies the geodesic KMS condition: for every wedge $W = \Lambda W_1$, 
$\Lambda \in SO_0(1,2)$, the partial state
	\[ 
		\qquad
		\qquad 
		\omega_{\upharpoonright \mathscr{A}(W)} (A) \doteq \langle \Omega, A  \Omega \rangle \; , 
		\quad A \in  \mathscr{A} (W) \; ,
	\]
satisfies the KMS-condition at inverse temperature~$\beta = 2 \pi r $ with respect to the one-parameter 
group $t \mapsto U (\Lambda_{W}( t/r)) $,  $t \in {\mathbb R} $.
\end{enumerate}
\item[$v.)$]  {\em (Additivity).}  
For $X$ a double cone or a wedge, there holds
	\begin{equation} 
		\label{Additivity} 
		\mathscr{A}(X) = \bigvee_{{\mathcal O\subset X}} \mathscr{A}({\mathcal O}) \; .
	\end{equation}
The right hand side denotes the von Neumann algebra generated by the 
local algebras associated to double cones ${\mathcal O}$ contained in $X$.
(It thus makes sense to define~$\mathscr{A}(X)$ for arbitrary regions
$X$ by Equ.~\eqref{Additivity}.) 
\item [$v'.)$] {\em (Weak additivity).}
For each double cone ${\mathcal O} \subset \mathbb{dS}$ there holds 
	\[  
 	 	\bigvee_{\Lambda \in SO_0 (1,2) } \mathscr{A}  (\Lambda
                {\mathcal O}) = \mathscr{A}  (\mathbb{dS})  
                  \quad \bigl( = \mathscr{B}({\mathcal F}) \bigr) \; . 
	\] 
\item [$vi.)$] {\em(Time-slice axiom \cite{CF}).}  
Let $I$ be an interval on a geodesic Cauchy surface and let $I''$ be its causal completion.
Let $\Xi \subset I'' $ be a neighbourhood of $I$. Then 
	\[
		\mathscr{A}( \Xi ) = \mathscr{A}(I'') \; ,  
	\] 
where both algebras are defined via Eq.~\eqref{Additivity}. In particular, the algebra of observables located within an 
arbitrary small time--slice coincides  with the algebra of all observables.
\end{enumerate}
\end{theorem}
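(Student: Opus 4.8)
The plan is to establish the six axioms one at a time, leaning heavily on the definitions in Definition~\ref{locoinfield} and on Theorem~\ref{AA0}, which identifies the time-zero algebras of the free and interacting theories. First I would dispose of the purely structural properties. \emph{Covariance} ($iii.$) is almost immediate: for a wedge $W=\Lambda W_1$ one has $U(\Lambda')\mathscr{A}(W)U(\Lambda')^{-1}=U(\Lambda'\Lambda)\mathscr{A}_\circ(W_1)U(\Lambda'\Lambda)^{-1}=\mathscr{A}(\Lambda'W)$ directly from \eqref{a-w-U}, using that \eqref{u-a-0} makes the definition independent of the choice of $\Lambda$ with $W=\Lambda W_1$; for double cones one then intersects over all wedges containing $\mathcal{O}$ and uses $\Lambda'(\bigcap_{W\supset\mathcal{O}}W)=\bigcap_{W\supset\Lambda'\mathcal{O}}W$. \emph{Isotony} ($i.$) for wedges and double cones is built into \eqref{localfield} (a smaller region is contained in more wedges, hence a smaller intersection); for general regions defined via \eqref{Additivity} it is automatic. \emph{Covariance} for double cones together with isotony also yields the extension of covariance to arbitrary regions once Additivity is in place.

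Next I would prove \emph{Locality} ($ii.$), which is where finite speed of light does its work. Since every double cone lies in a wedge, and since both $\mathscr{A}(\mathcal{O}_1)$ and $\mathscr{A}(\mathcal{O}_2)$ shrink to time-zero Weyl algebras when we move the regions onto the Cauchy surface $S^1$, it suffices to treat two space-like separated intervals $I_1,I_2\subset S^1$, \emph{i.e.}\ $I_1\subset (I_2)^c$. But Theorem~\ref{AA0} gives $\mathscr{A}(\mathcal{O}_{I_j})=\mathscr{A}_\circ(\mathcal{O}_{I_j})$, and locality for the \emph{free} net on the Cauchy surface is exactly microcausality of the one-particle subspaces, Proposition~\ref{Prop-ii.4}~$iii.$, lifted to the Weyl algebra via the Weyl relation $V(h)V(g)=\mathrm{e}^{-i\Im\langle h,g\rangle}V(h+g)$ (disjointly supported Cauchy data give $\Im\langle h,g\rangle=0$, hence commuting Weyl operators, hence commuting von Neumann algebras). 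For a general pair $\mathcal{O}_1\subset\mathcal{O}_2'$ one picks a wedge $W$ with $\mathcal{O}_1\subset W$, $\mathcal{O}_2\subset W'$; then $\mathscr{A}(\mathcal{O}_1)\subset\mathscr{A}(W)$ and, by finite speed of light plus the free-theory duality, $\mathscr{A}(W)\subset\mathscr{A}_\circ(W)\subset\mathscr{A}_\circ(W')'\subset\mathscr{A}(\mathcal{O}_2)'$.

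For \emph{Existence and Uniqueness of the Vacuum} ($iv.$) I would invoke Borchers--Buchholz~\cite{BoB}: item a.)\ is the defining property of $\Omega$ (it is the vector implementing the interacting boosts as the modular group for $(\mathcal{A}_\circ(W_1),\Omega)$, constructed in~\cite{BJM-1} and $SO_0(1,2)$-invariant by construction), and item b.)\ is the KMS property of $\Omega$ for the modular group, transported to an arbitrary wedge $W=\Lambda W_1$ by covariance; uniqueness up to a phase follows because any invariant vector is fixed by the modular group of $\mathcal{A}_\circ(W_1)$, hence (cyclicity/separation) determined, and the phase is pinned by demanding $\Omega\in\mathcal{P}^\natural(\mathcal{A}_\circ(W_1),\Omega_\circ)$. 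For \emph{Additivity} ($v.$): for a wedge $X=W_1$ one combines Theorem~\ref{AA0} with the free-field additivity \eqref{additivity} at the one-particle level (Proposition~2.\ item $ii.$), passing to Weyl algebras and using that the von Neumann algebra generated by $\{V(h):h\in\bigvee_\alpha\mathcal{H}(\mathcal{O}_{R_0(\alpha)I})\}$ equals $\mathscr{A}_\circ(W_1)=\mathscr{A}(W_1)$; general wedges follow by covariance, and double cones by a repetition of the argument in the proof of Proposition~2.\ $i.$\ (writing $\mathcal{O}_I=W(\alpha)\cap W(\beta)$ and intersecting). \emph{Weak additivity} ($v'.$) then follows from $v.$\ applied to $X=\mathbb{dS}$ together with the observation that the rotation orbit of any single interval generates, by additivity of the free one-particle spaces and the fact that $\mathcal{H}_{I_+}=\mathcal{H}(W_1)$, the full space—so the generated algebra is irreducible, \emph{i.e.}\ all of $\mathscr{B}(\mathcal{F})$. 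Finally the \emph{Time-slice axiom} ($vi.$) is a corollary: by Theorem~\ref{AA0} the algebras $\mathscr{A}(\Xi)$ and $\mathscr{A}(I'')$ are computed entirely from time-zero Weyl operators with Cauchy data in a neighbourhood of $I$ respectively in $I$ itself, and the two generate the same von Neumann algebra because $C_0^\infty$ functions (or their images under $\omega,\omega^{-1}$) supported in a neighbourhood of $\overline I$ approximate, in the relevant $\mathbb{H}^{1/2}$ and $L^2$ topologies, those supported in $I$; the strong continuity of $h\mapsto V(h)$ then gives equality of the generated algebras.

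The main obstacle I expect is not any single axiom but the bookkeeping needed to reduce everything to the Cauchy surface: each axiom has to be first proved for wedges (or intervals on $S^1$), where Theorem~\ref{AA0} and the free-theory results apply, and then bootstrapped to double cones and general regions via covariance and the intersection formula \eqref{localfield}. In particular, the Additivity proof for double cones requires care because a double cone algebra is an intersection over \emph{all} wedges containing it, not merely those based on $S^1$; one must show that finite speed of light makes the extra wedges redundant, exactly as in the proof of Proposition~2.\ $i.$, and this is the step most likely to need a detailed argument rather than a one-line reference.
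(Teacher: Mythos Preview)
Your overall architecture matches the paper's, but your treatment of \emph{Locality} contains a genuine gap, and you misidentify where finite speed of light is needed.

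In your chain
\[
\mathscr{A}(W)\subset\mathscr{A}_\circ(W)\subset\mathscr{A}_\circ(W')'\subset\mathscr{A}(\mathcal{O}_2)'
\]
neither the first nor the last inclusion is justified. For a wedge $W=\Lambda W_1$ whose edges are \emph{not} on $S^1$, the algebras $\mathscr{A}(W)=U(\Lambda)\mathscr{A}_\circ(W_1)U(\Lambda)^{-1}$ and $\mathscr{A}_\circ(W)=U_\circ(\Lambda)\mathscr{A}_\circ(W_1)U_\circ(\Lambda)^{-1}$ are conjugates by \emph{different} unitaries, and finite speed of light only gives you $\mathscr{A}(W)\subset\mathscr{A}_\circ(\mathcal{O}_J)$ with $J=\Gamma(W)\cap S^1$, not $\mathscr{A}(W)\subset\mathscr{A}_\circ(W)$. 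Likewise the last inclusion would require $\mathscr{A}(\mathcal{O}_2)\subset\mathscr{A}_\circ(W')$, which you have no reason to believe. The paper's argument is both simpler and independent of finite speed of light: wedge duality for the \emph{interacting} net,
\[
\mathscr{A}(W')=U(\Lambda)\mathscr{A}_\circ(W_1')U(\Lambda)^{-1}
=U(\Lambda)\mathscr{A}_\circ(W_1)'U(\Lambda)^{-1}
=\bigl(U(\Lambda)\mathscr{A}_\circ(W_1)U(\Lambda)^{-1}\bigr)'=\mathscr{A}(W)',
\]
is inherited purely algebraically from the free duality $\mathscr{A}_\circ(W_1')=\mathscr{A}_\circ(W_1)'$ via \eqref{a-w-U}. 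Then $\mathscr{A}(\mathcal{O}_1)\subset\mathscr{A}(W)=\mathscr{A}(W')'\subset\mathscr{A}(\mathcal{O}_2)'$ by isotony. The paper explicitly remarks that properties $i.)$--$iv.)$ do \emph{not} use finite speed of light; it is needed only for $v.)$, $v'.)$ and $vi.)$, contrary to your claim that ``Locality is where finite speed of light does its work''.

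A second, smaller point: you overcomplicate Additivity for double cones. If $X$ is itself a double cone, then $X$ appears among the $\mathcal{O}\subset X$ on the right of \eqref{Additivity}, so the inclusion $\mathscr{A}(X)\subset\bigvee_{\mathcal{O}\subset X}\mathscr{A}(\mathcal{O})$ is tautological; there is nothing to bootstrap. The only nontrivial case is $X$ a wedge, which you handle correctly via Theorem~\ref{AA0} and the one-particle additivity \eqref{additivity}, exactly as the paper does. Your worry in the final paragraph about ``extra wedges'' being redundant is relevant to Theorem~\ref{AA0} itself (where it is indeed the crux), not to Additivity.
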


\begin{proof} 
Property $i.)$, \emph{isotony}, follows directly from the definition given in \eqref{localfield}.
Next, let us establish property $ii.)$, \emph{locality}.
If $\mathcal{O}_1$ and~$\mathcal{O}_2$ are two
space-like separated causally complete, open and bounded regions,
then there exists a wedge  $W = \Lambda W_1$ such that
	\[
		\mathcal{O}_1 \subset W 	\quad \text{and} \quad \mathcal{O}_2 \subset W' \, . 	
	\]
Now the interacting net inherits wedge duality
        \[
        		\mathscr{A} (W')=\mathscr{A}(W)'
        \]
from the identity $\mathscr{A}_\circ (W_1')=\mathscr{A}_\circ (W_1)'$
using \eqref{a-w-U}.   
These facts imply locality. 

Now, let us prove property $iii.)$, \emph{covariance}. 
Let $\Lambda \in SO_0(1,2)$ be fixed. By construction, the set of all wedges equals
$\{ \Lambda W_1 \mid \Lambda \in SO_0(1, 2) \}$. Thus 
	\begin{align*}
	{\mathscr A} (\Lambda {\mathcal O} ) 
	& =  \bigcap_{\Lambda {\mathcal O} \subset  \Lambda W} 
	{\mathscr A} ( \Lambda W)
	 =   \bigcap_{ {\mathcal O} \subset W} 
	U ( \Lambda) {\mathscr A} ( W) U ( \Lambda)^{-1} 
	\\
	& =  U ( \Lambda)
	\Bigl( \; \bigcap_{ {\mathcal O} \subset W}  {\mathscr A} (W) \Bigr) U ( \Lambda)^{-1}
	=  U ( \Lambda)  {\mathscr A} ( {\mathcal O} )   U ( \Lambda)^{-1} \; ,
	\end{align*}
proving covariance. 

Property  $iv.)$ is established next: \emph{existence} of the de Sitter vacuum is guaranteed by 
construction, as the state induced by the vector $\Omega$ is a thermal state for the Hawking temperature $(2\pi r)^{-1}$ 
with respect to modular group for the pair $\bigl({\mathcal A}_\circ (W_1), \Omega\bigr)$. By covariance, this property extends 
to arbitrary wedges. \emph{Uniqueness of the de Sitter vacuum state}, 
was established in \cite[Theorem 9.3.7]{BJM-1}. 
It follows from the fact that for the free field the KMS state for the algebra 
on~${\mathscr A}_\circ \bigl(  W_1 \bigr)$ is unique, hence ${\mathscr A} \bigl(  W_1 \bigr) = {\mathscr A}_\circ \bigl(  W_1 \bigr)$
(see Definition \ref{locoinfield}) is a factor, and uniqueness of the interacting KMS state now is 
a direct consequence of \cite[Proposition~5.3.29]{BR}. 

Next, we will establish property $v.)$, \emph{additivity}.
The inclusion 
	\[ 
		\mathscr{A}(X) \supset \bigvee_{{\mathcal
    			O\subset X}} \mathscr{A}({\mathcal O}) \; .
	\]
is a consequence of isotony. Moreover, if $X$ is a double cone, then 
$X$ itself is among the double cones on the right hand side, so the inclusion $\subset$  
automatically holds.  It remains to prove the inclusion $\subset$ if $X$ is a wedge.
If $X=W_1$, then $\mathscr{A}(X)$ coincides with $\mathscr{A}_\circ(X)$, for which the additivity 
property for the one-particle space \eqref{additivity} implies
	\[
		\mathscr{A}_\circ(X) =  \bigvee_{R_0 (\alpha) I\subset I_+}\mathscr{A}_\circ \bigl(R_0 (\alpha) I \bigr) \; , 
		\qquad \alpha \in [0, 2\pi) \; , 
	\]
where $I$ is any (arbitrarily small) interval contained in $I_+\doteq
W_1\cap S^1$, whose causal completion ${\mathcal O}_I=I''$.  Thus,
	\[
		\mathscr{A} \bigl(W_1 \bigr) 
		= \bigvee_{R_0 (\alpha) I \subset I_+}\mathscr{A} \bigl(R_0 (\alpha) {\mathcal O}_I \bigr)
			\subset \bigvee_{{\mathcal O}\subset X}\mathscr{A}({\mathcal O}) \; .        
	\]
Thus, for $X=W_1$ the inclusion $\subset$ holds. By covariance, it
also holds if $X$ is any other wedge. 
\color{black}

Property $v'.)$, \emph{weak additivity}, follows form a similar argument: for each double cone
${\mathcal O} \subset \mathbb{dS}$ there exists a Lorentz transformation
$\Lambda_0 \in SO_0(1,2)$ such that ${\mathcal O}= \Lambda_0  \mathcal{O}_I$ 
for some open interval $I \subset S^1$. Now
	\begin{align*}
 	 	\bigvee_{\Lambda \in SO_0(2,1)} \mathscr{A}  \bigl( \Lambda {\mathcal O} \bigr) 
                & = \bigvee_{\Lambda \in SO_0(2,1)} \mathscr{A}  \bigl( \Lambda \Lambda_0 {\mathcal O}_I \bigr) \\
		& = \bigvee_{\Lambda \in SO_0(2,1)} \mathscr{A}  \bigl(\Lambda  {\mathcal O}_I \bigr)
		\supset \bigvee_{ \alpha \in [0, 2\pi) } \mathscr{A}  \bigl(R_0 (\alpha) {\mathcal  O}_I \bigr) \\
		&= \bigvee_{\alpha \in [0, 2\pi) } \mathscr{A}_\circ  \bigl(R_0 (\alpha) {\mathcal  O}_I \bigr) 
		= \mathscr{B}({\mathcal F})  \; . 
	\end{align*} 
Again, the last equality relies on the additivity property for the one-particle space. Hence, property $v'.)$, 
\emph{weak additivity}, is established. 

Let us prove property $vi.)$, the \emph{time-slice axiom}. As the geodesic Cauchy surfaces are precisely 
the Lorentz transforms of intervals on the equator $S^1$  (see, \emph{e.g.},~\cite{ONeill}), it 
is sufficient to consider an interval $I \subset S^1$ and a neighbourhood $\Xi \subset I''$
of $I$. If the length of $I$ is less than $\pi r$, then $I''$ is a double cone and ${\mathscr A}(I'')={\mathscr A}_\circ(I'')$. 
Pick any double cone $\mathcal{O}\subset \Xi$ with base on $S^1$. Then the additivity property  
of the free net implies that 
	\[ 
		{\mathscr A}_\circ(I'') = 
		\bigvee_{R_0 (\alpha) \mathcal{O}\subset \Xi }
   		{\mathscr A}_\circ(R_0 (\alpha) \mathcal{O}) \; , \qquad \alpha \in [0, 2\pi) \; .  
	\]
Now ${\mathscr A}_\circ \bigl( R_0 (\alpha) \mathcal{O} \bigr)$ coincides with 
${\mathscr A}\bigl( R_0 (\alpha) \mathcal{O} \bigr)$, and hence the above identity implies 
${\mathscr A}(I'')\subset {\mathscr A}( \Xi )$. The other inclusion follows from isotony. 

\goodbreak
If the length of $I$ is at least $\pi r$, then additivity implies that 
${\mathscr A}(I'')$ is generated by the wedge algebras ${\mathscr A}(R_0(\alpha) W_1)
= {\mathscr A}_\circ(R_0(\alpha) W_1)$, $R_0(\alpha) W_1 \subset I''$. 
Hence, as before, ${\mathscr A}(I'')={\mathscr A}_\circ(I'')$.
As before, the additivity property  of the free net implies that 
	\[ 
		{\mathscr A}_\circ(I'') = 
		\bigvee_{R_0 (\alpha) \mathcal{O}\subset \Xi }
   		{\mathscr A}_\circ(R_0 (\alpha) \mathcal{O}) \; , \qquad \alpha \in [0, 2\pi) \; .  
	\]
Now, just as before,  ${\mathscr A}_\circ \bigl( R_0 (\alpha) \mathcal{O} \bigr)$ coincides with 
${\mathscr A}\bigl( R_0 (\alpha) \mathcal{O} \bigr)$, 
and hence the above identity implies 
${\mathscr A}(I'')\subset {\mathscr A}( \Xi )$. The other inclusion follows from isotony. 
\color{black}
\end{proof}

The local algebras ${\mathscr A}_\circ ({\mathcal O})$ and ${\mathscr A}_\circ ({\mathcal W})$, 
for double cones and wedges, respectively, 
are hyper-finite type III$_1$ factors; see~\cite{FG}.  

\begin{corollary}
Let $X$ be either a double cone ${\mathcal O}$ or a wedge $W$. It follows that
${\mathscr A} ( X )$ is a hyperfinite type $III_1$ factor. 
\end{corollary}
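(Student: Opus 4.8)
The plan is to deduce the corollary from Theorem~\ref{AA0}, Theorem~\ref{th:6}, and the cited structural result of Figliolini--Guido on the free net. First I would observe that Theorem~\ref{AA0} gives ${\mathscr A}({\mathcal O}_I) = {\mathscr A}_\circ({\mathcal O}_I)$ for every interval $I\subset S^1$, and more generally, by covariance (property $iii.)$ of Theorem~\ref{th:6}) together with the fact that every double cone is of the form $\Lambda_0{\mathcal O}_I$ for some $\Lambda_0\in SO_0(1,2)$ and some interval $I\subset S^1$, that ${\mathscr A}(\Lambda_0{\mathcal O}_I) = U(\Lambda_0){\mathscr A}_\circ({\mathcal O}_I)U(\Lambda_0)^{-1}$. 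This last algebra is unitarily equivalent to ${\mathscr A}_\circ({\mathcal O}_I)$, hence a hyperfinite type $III_1$ factor by the quoted result from~\cite{FG}. Since the property of being a hyperfinite type $III_1$ factor is preserved under conjugation by a unitary, the double-cone case follows immediately.

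For a wedge $W$, I would argue as follows. Every wedge is $W = \Lambda W_1$ for some $\Lambda\in SO_0(1,2)$, and by \eqref{a-w-U} we have ${\mathscr A}(W) = U(\Lambda){\mathscr A}_\circ(W_1)U(\Lambda)^{-1}$, so again it suffices to treat $W_1$. For $W_1$ the edges lie on the Cauchy surface $S^1$; in fact $W_1\cap S^1 = I_+$, and by the identification ${\mathscr A}(W_1) = {\mathscr A}_\circ(W_1)$ noted in the proof of property $iv.)$ (a consequence of uniqueness of the free KMS state, or directly of finite speed of light applied to $W_1$ and $W_1'$), we reduce to the free wedge algebra ${\mathscr A}_\circ(W_1)$. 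The latter is a hyperfinite type $III_1$ factor by~\cite{FG}. Conjugating by the unitary $U(\Lambda)$ preserves this, so ${\mathscr A}(W)$ is a hyperfinite type $III_1$ factor.

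The only point requiring a little care — and the step I expect to be the main (though still minor) obstacle — is the reduction for general double cones: one must be sure that the unitary conjugation is by an element of the \emph{interacting} representation $U$, not the free one $U_\circ$, and that the statement ${\mathscr A}({\mathcal O}_I) = {\mathscr A}_\circ({\mathcal O}_I)$ of Theorem~\ref{AA0} refers to the algebras of the interacting net restricted to the Cauchy surface. Both are exactly what Theorem~\ref{AA0} and Definition~\ref{locoinfield} provide, so no genuine difficulty arises; the corollary is essentially a bookkeeping consequence of the identification of the two nets on $S^1$ together with the covariance established in Theorem~\ref{th:6}. I would write the proof in two short paragraphs — one for double cones, one for wedges — each explicitly displaying the unitary equivalence to a free algebra and then invoking~\cite{FG} and the unitary-invariance of the class of hyperfinite type $III_1$ factors.
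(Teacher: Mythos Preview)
Your proposal is correct and follows essentially the same approach as the paper: identify the interacting algebras with the free ones for regions with base on $S^1$ (via Theorem~\ref{AA0} for double cones and the definition~\eqref{a-w-U} for $W_1$), then use covariance to reduce an arbitrary double cone or wedge to this case and invoke~\cite{FG} together with unitary invariance of the hyperfinite type $III_1$ property. The paper's proof is just a terser statement of exactly this argument.
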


\begin{proof} 
For wedges and double cones with base in $S^1$ the interacting algebras coincide with 
the free ones, for which the factor property is known as stated before the Corollary. 
Any other wedge or double cone can be mapped by a Lorentz transformation to 
one with base in $S^1$, and the claim follows by covariance.
\end{proof}

\begin{remarks}
\quad
\begin{itemize}
\item [$i.)$] 
Finite speed of light was used to establish the additivity properties $v.)$, $v'.)$ as well as $vi.)$.
The properties $i.)$ to $iv.)$ did not require this property of the representation~$U$ of $SO_0(1,2)$. 
 
\item [$ii.)$]
It has been shown by Borchers and Buchholz that if one assumes the 
geodesic KMS condition to hold for some $\beta > 0$, then~automatically $\beta = 2 \pi \, r $; see
\cite[Theorem~6.2]{BoB}.
\end{itemize}
\end{remarks}

\begin{theorem}[Borchers \& Buchholz \cite{BoB}]
\label{BOBU}
The following properties hold for any algebraic quantum theory ${\mathcal O} \to {\mathscr A}({\mathcal O})$ on the de Sitter space, 
which satisfies the Haag-Kastler axioms $i.)-v.)$ stated in Theorem \ref{th:6}:  
\begin{itemize}
\item[$ i.)$] {\em (The Reeh-Schlieder property).}
For any open region ${\mathcal O} \subset \mathbb{dS}$ there holds
\[ \overline{ {\mathscr A} ({\mathcal O}) \Omega } = \mathfrak{h}  \; . \]
\item[$ ii.)$] {\rm (Wedge duality).} For any wedge $W \subset \mathbb{dS}$ there holds 
\[ J_{W}  \; {\mathscr A}   (W  ) \, J_{W}   
= {\mathscr A}  (W  )' = {\mathscr A}  ({W }' ) \; . \]
Here $J_{W}$ denotes the modular conjugation associated to $({\mathscr A}  ({W } ) , \Omega)$.  

\smallskip

\item[$ iii.)$] {\em (The PCT symmetry).}
Let $T$ denote the time reflection and $P_1$ the reflection across the $x_0$-$x_1$-plane. 
Then the modular conjugation $J_{W_1}$ for the wedge~$W_1$ is an anti-unitary 
representer of the reflection 
	\[
	TP_1 \in O (1, 2) \; , 
	\]
which induces the corresponding action on 
$U (SO_0 (1,2))$ and ${\mathscr A} (\mathbb{dS})$. 
A similar result holds for an arbitrary wedge~$\Lambda W_1$, $\Lambda \in SO_0(1,2)$; the relevant reflection is then 
$ \Lambda TP_1 \Lambda^{-1} \in O (1, 2) $. 
\end{itemize}
\end{theorem}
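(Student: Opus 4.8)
The assertion is the de Sitter counterpart of the results of Borchers and Buchholz~\cite{BoB}, so the plan is to check that our net meets the hypotheses of their analysis and then, for the PCT part, to identify the modular conjugation explicitly from the construction in Definition~\ref{locoinfield}. The hypotheses required are precisely isotony, locality, covariance, existence and uniqueness of a geodesic KMS vacuum, and additivity, that is, items~$i.)$--$v.)$ of Theorem~\ref{th:6}, which are already in hand. Two facts are recorded at the outset. First, $\Omega$ is cyclic and separating for every wedge algebra: for $W_1$ because $\mathscr{A}(W_1)=\mathscr{A}_\circ(W_1)$ by Theorem~\ref{AA0} and $\Omega$ was chosen cyclic and separating for $\mathscr{A}_\circ(W_1)$, and for a general wedge $W=\Lambda W_1$ by covariance. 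Second, by the geodesic KMS condition (Theorem~\ref{th:6}~$iv.)$) the boost group $s\mapsto U(\Lambda_W(s))$ is, up to a fixed reparametrisation, the modular unitary group of $(\mathscr{A}(W),\Omega)$; in particular, for $B\in\mathscr{A}(W)$ the vector $U(\Lambda_W(s))B\Omega$ extends analytically in the rapidity $s$ to a strip of width $\pi r$, continuously up to the boundary.

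For the Reeh-Schlieder property $i.)$ one invokes the argument of \cite{BoB}, whose skeleton is the classical one with boosts in place of translations and the above strip analyticity in place of the spectrum condition. If $\Psi\perp\mathscr{A}(\mathcal{O})\Omega$ for a non-empty open $\mathcal{O}$, choose a double cone $\mathcal{O}_0$ with $\overline{\mathcal{O}_0}\subset\mathcal{O}$ and a neighbourhood $N$ of the identity in $SO_0(1,2)$ with $\Lambda\mathcal{O}_0\subset\mathcal{O}$ for $\Lambda\in N$. Since $U(\Lambda)\Omega=\Omega$ and $U(\Lambda)AU(\Lambda)^{-1}\in\mathscr{A}(\Lambda\mathcal{O}_0)\subset\mathscr{A}(\mathcal{O})$ for $A\in\mathscr{A}(\mathcal{O}_0)$ and $\Lambda\in N$, the matrix elements $\langle\Psi,U(\Lambda_1)A_1U(\Lambda_1)^{-1}\cdots U(\Lambda_n)A_nU(\Lambda_n)^{-1}\Omega\rangle$ with $A_i\in\mathscr{A}(\mathcal{O}_0)$ vanish whenever all $\Lambda_i\in N$; rewriting these (using $U(\Lambda)\Omega=\Omega$) as functions of boost rapidities along one-parameter subgroups and iterating the analytic continuation across the strips supplied by the KMS condition forces them to vanish for all parameter values. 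Hence $\Psi$ is orthogonal to $B\Omega$ for every $B$ in the $*$-algebra generated by $\bigcup_{\Lambda\in SO_0(1,2)}\mathscr{A}(\Lambda\mathcal{O}_0)$; by weak additivity (Theorem~\ref{th:6}~$v'.)$, applied to $\mathcal{O}_0$) this $*$-algebra is weakly dense in $\mathscr{B}(\mathcal{F})$, so $\{B\Omega\}$ is dense in $\mathcal{F}$, and therefore $\Psi=0$, i.e.\ $\overline{\mathscr{A}(\mathcal{O})\Omega}=\mathcal{F}$.

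Wedge duality $ii.)$ for $W_1$ is Tomita-Takesaki, $J_{W_1}\mathscr{A}(W_1)J_{W_1}=\mathscr{A}(W_1)'$, combined with the identity $\mathscr{A}(W_1)'=\mathscr{A}(W_1')$ established in the proof of Theorem~\ref{th:6}~$ii.)$; for $W=\Lambda W_1$ one uses $J_W=U(\Lambda)J_{W_1}U(\Lambda)^{-1}$. For the PCT statement $iii.)$, since $\Omega$ lies in the natural positive cone $\mathcal{P}^\natural(\mathscr{A}_\circ(W_1),\Omega_\circ)$ and $\mathscr{A}(W_1)=\mathscr{A}_\circ(W_1)$, the modular conjugation $J_{W_1}$ coincides with that of the free pair $(\mathscr{A}_\circ(W_1),\Omega_\circ)$, which by Theorem~\ref{UIR-S1} and second quantisation is $\mathbb{\Gamma}(u(\Theta_{W_1}))$ with $\Theta_{W_1}=P_1T=TP_1$. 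One then checks that $J_{W_1}$ implements $\Theta_{W_1}$ on the whole representation $U$: on the rotations $U(R_0(\alpha))=\mathbb{\Gamma}(u(R_0(\alpha)))$ this is immediate from $\Theta_{W_1}R_0(\alpha)\Theta_{W_1}=R_0(-\alpha)$, and on the boosts $U(\Lambda_1(t))$, being a function of $\Delta_{W_1}$, from $J_{W_1}\Delta_{W_1}^{is}J_{W_1}=\Delta_{W_1}^{is}$ together with $\Theta_{W_1}\Lambda_1(t)\Theta_{W_1}=\Lambda_1(t)$; as these subgroups generate $SO_0(1,2)$, one gets $J_{W_1}U(\Lambda)J_{W_1}=U(\Theta_{W_1}\Lambda\Theta_{W_1})$ for all $\Lambda$. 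Combined with covariance of the net and with $\mathscr{A}(W_1')=\mathscr{A}_\circ(W_1')$ (Theorem~\ref{AA0}, as $W_1'$ has base on $S^1$), this yields $J_{W_1}\mathscr{A}(W)J_{W_1}=\mathscr{A}(\Theta_{W_1}W)$ for every wedge, hence, by intersecting over the wedges containing a region and then using \eqref{Additivity}, $J_{W_1}\mathscr{A}(\mathcal{O})J_{W_1}=\mathscr{A}(\Theta_{W_1}\mathcal{O})$ for arbitrary regions. Thus $J_{W_1}$ is an anti-unitary representer of $TP_1$ inducing the stated action on $U(SO_0(1,2))$ and on $\mathscr{A}(\mathbb{dS})$; conjugating by $U(\Lambda)$ gives the reflection $\Lambda TP_1\Lambda^{-1}$ for $\Lambda W_1$.

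The single genuinely non-routine step is the Reeh-Schlieder continuation. De Sitter space has no translation subgroup and hence no spectrum condition, so the propagation of the vanishing of the matrix elements from a neighbourhood of the identity to all of $SO_0(1,2)$ must be carried entirely by the one-sided strip analyticity in the boost rapidity coming from the geodesic KMS condition, iterated over several variables; this is the technical heart of \cite{BoB}, and the one place where we lean on their analysis rather than on bookkeeping with the axioms already established. A secondary point requiring care is the identification $J_{W_1}=\mathbb{\Gamma}(u(\Theta_{W_1}))$ for the \emph{interacting} vacuum, which rests on the fact that cyclic and separating vectors lying in a common natural positive cone share the same modular conjugation, together with $\mathscr{A}(W_1)=\mathscr{A}_\circ(W_1)$.
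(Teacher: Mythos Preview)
The paper's own proof is three bare citations: $i.)$ is Theorem~3.1 of~\cite{BoB}, $ii.)$ is Proposition~6.1 of~\cite{BoB}, and $iii.)$ is Theorem~6.3 of~\cite{BoB}. Your expansion of $i.)$ and $ii.)$ is correct and simply unpacks those citations; in particular your Reeh--Schlieder sketch is faithful to the Borchers--Buchholz mechanism of replacing spectrum-condition analyticity by KMS strip analyticity in boost rapidities.

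For $iii.)$ you take a genuinely different route. The theorem, as stated, applies to \emph{any} net on $\mathbb{dS}$ satisfying axioms $i.)$--$v.)$ of Theorem~\ref{th:6}, and Theorem~6.3 of~\cite{BoB} derives the PCT property from those axioms alone (via the Bisognano--Wichmann identification of modular objects with the geometric boost and reflection, which in~\cite{BoB} is obtained from the geodesic KMS condition). Your argument instead exploits features peculiar to the construction in Definition~\ref{locoinfield}: that $\Omega\in\mathcal{P}^\natural(\mathscr{A}_\circ(W_1),\Omega_\circ)$, that $\mathscr{A}(W_1)=\mathscr{A}_\circ(W_1)$, and the explicit second-quantised form $\mathbb{\Gamma}(u(\Theta_{W_1}))$ of the free modular conjugation. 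This is a correct and pleasantly concrete argument \emph{for the model at hand}, with the bonus that it exhibits $J_{W_1}$ explicitly; but it does not establish $iii.)$ in the generality claimed by the statement. If you wish to match the stated scope you should either invoke~\cite{BoB} directly, as the paper does, or narrow the hypothesis of the theorem to nets arising from Definition~\ref{locoinfield} with $\Omega$ in the natural positive cone.
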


\begin{proof}
$ i.)$ is Theorem 3.1 in  \cite{BoB}; $ ii.)$ is Proposition 6.1 in \cite{BoB}; $ iii.)$ is Theorem~6.3 in \cite{BoB}.
\end{proof}

\begin{remark}
Another result one might want to mention is that uniqueness of the interacting de Sitter vacuum 
implies that it is {\em weakly mixing} with respect to the action 
of the boosts \cite[Corollary~4.4]{BoB}. 
\end{remark}

\section{Summary}
\label{sec:5}

In a truly pioneering work \cite{FHN}, published in 1975, Figari, Høegh-Krohn and Nappi constructed the 
Wightman $n$-point functions of the ${\mathscr P} (\varphi)_2$ model in a wedge 
of de Sitter space-time from the Schwinger functions on the Euclidean sphere. More recently, the authors 
provided an elementary construction (which ignores domain questions\footnote{It was shown
in  \cite{BJM-1} that the operator sum in \eqref{L-int}, which involves two operators that are neither bounded 
from below nor from above, is essentially self-adjoint on its natural domain.}) of this model in its canonical 
formulation \cite{JM}. The main objective of this work was to show that once the ${\mathscr P} (\varphi)_2$ model on 
the de Sitter model has been established in its canonical formulation, a covariant formulation is readily accessible using 
the machinery developed by Brunetti, Guido and Longo \cite{BGL}. As the interacting representation of the Lorentz 
group  satisfies finite speed of light, the Haag-Kastler type axioms of Borchers and Buchholz \cite{BoB} can easily 
be verified. To the best of our knowledge, this is the first time that a \emph{non-perturbative, covariant, interacting} 
quantum field theory satisfying the basic properties encoded in the axioms has been established on a curved space-time.  

\paragraph*{Acknowledgments.}
We thank two anonymous referees for stimulating substantial improvements of the manuscript.
Jens Mund was partially supported by the São Paulo Research Foundation
(FAPESP), grant number 2014/24522-9,  and both authors were supported by the Conselho Nacional de
Desenvolvimento Cientifico e Tecnológico (CNPq). Jens Mund is also grateful to FAPEMIG, CAPES and Finep.  
Both authors would like to thank Jo\~ao Carlos Alves Barata for his contribution to this work.


\begin{thebibliography}{99.}
\bibitem{A1} 
	H.~Araki,  
	\textit{A lattice of von Neumann algebras associated with the quantum theory of a free Bose field}, 
	{J. Math. Phys. \textbf{4}}
	{(1963)}
	{1343--1362}.
\bibitem{A2} 
	H.~Araki,  
	\textit{Von Neumann algebras of local observables for free scalar field}, 
	{J. Math. Phys. \textbf{5}}
	{(1964)}
	{1--13}.
\bibitem{BJM-1} 
	{J.~Barata, C.\ J\"akel and J. Mund},
	\textit{Interacting quantum fields on de Sitter Space}, 
	{see 	arXiv:1607.02265v2}. 
\bibitem{BoB} 
	{H.-J. Borchers and D. Buchholz}, 
	\textit{Global properties of the vacuum states in de Sitter space}, 
	{Ann.~Inst.~H.~Poincar\'e~\textbf{A70}}
	{(1999)}
	{23--40}.
\bibitem{BiWia} 
	J.J. Bisognano and E.H. Wichman, 
	\textit{On the duality condition for a hermitian scalar field}, 
	{J. Math. Phys.~\textbf{16}} 
	{(1975)}
	{985--1007}.
\bibitem{BiWib} 
	J.J. Bisognano and E.H. Wichman,
	\textit{On the duality condition for quantum fields}, 
	{J. Math. Phys.~\textbf{17}}
	{(1976)} 
	{303--321}.
\bibitem{BR} 
	O. Bratteli and D.W. Robinson, 
	\textit{Operator Algebras and Quantum Statistical Mechanics~I, II}, 
	{Sprin\-ger-Verlag, New York-Heidelberg-Berlin},
	{1981}.
\bibitem{BM} 
	{J. Bros and U. Moschella}, 
	\textit{Two-point functions and quantum fields in de Sitter universe}, 
	{Rev.\ Math.\ Phys.~\textbf{8}}
	{(1996)}
	{327--391}.
\bibitem{BGL}
	R. Brunetti, D. Guido and R. Longo, 
	\textit{Modular localization and Wigner particles}, 
	Rev. Math. Phys.~\textbf{14} 
	{(2002)}
	{759--785}.
\bibitem{CF} 
	{B. Chilian  and K. Fredenhagen},  
	\textit{The time slice axiom in perturbative quantum field theory on globally hyperbolic spacetimes},  
	{Comm. Math. Phys.~\textbf{287}}
	{(2009)} 
	{513--522}.
\bibitem{DJP} 
	J. Derezinski, V. Jaksic and C.-A. Pillet, 
	\textit{Perturbation theory of $W^*$-dynamics, Liouvilleans and KMS states}, 
	{Rev. Math. Phys. \textbf{15}} 
	{(2003)} 
	{447--489}.
\bibitem{FHN} 
	R. Figari, R. H\o egh-Krohn  and C.R. Nappi, 
	\textit{Interacting relativistic boson fields in the de Sitter universe with two space-time dimensions}, 
	{Comm. Math. Phys. \textbf{44}}
	{(1975)}
	{265--278}.
\bibitem{FG} 
	{F. Figliolini and D. Guido}, 
	\textit{The Tomita operator for the free scalar field}, 
	{Ann. H. Poincar\'e~\textbf{51}}
	{(1989)}
	{419--435}.
\bibitem{GJ} 
	J. Glimm and A. Jaffe, \textit{Boson Quantum Field models}, published in
	\textit{Collected Papers, Vol. I: Quantum Field Theory and Statistical Mechanics; Expositions; 
	Vol. II: Constructive Quantum Field Theory, Selected Papers},
	{Birkh\"auser, Boston},
	{1985}.
\bibitem{Gui}
	{A. Guichardet},
	\textit{Tensor products of $C^*$-Algebras (Infinite)},
	{Aarhus University},
	{Lecture Notes Series \textbf{13}},
	{1969}.
\bibitem{Guido} 
	D. Guido,  
	\textit{Modular theory for the von Neumann algebras of local quantum physics}, 
	{Cont. Math. \textbf{534}} 
	(2011)
	97--120. 
\bibitem{JM} 
	{C.\ J\"akel and J. Mund},
	\textit{Canonical interacting quantum fields on two-dimensional de Sitter space}, 
	{Phys. Lett. \textbf{B 772}}
	{(2017)}
	{786--790}.
\bibitem{Le2} 
	{G. Lechner},  
	\textit{Construction of quantum field theories with factorizing $S$-matrices}, 
	{Comm. Math. Phys.~\textbf{277}}
	{(2008)}
	{821--860}.
\bibitem{Le3} 
	{G. Lechner},  
	\textit{On the existence of local observables in theories with a factorizing S-matrix}, 
	{J. Phys. A~\textbf{38}} 
	{(2005)}	
	{3045--3056}.
\bibitem{LL} 
	{G. Lechner and R. Longo}
	\textit{Localization in Nets of Standard Spaces}
	{Comm. Math. Phys. \textbf{336}}
	{(2015)}
	{27--61}.
\bibitem{Masuda}
	K. Masuda, K. (1972). 
	\textit{Anti-locality of the one-half power of elliptic differential operators}, 
	RIMS~\textbf{8}
	{207--210}. 
\bibitem{ONeill}
	{B. O'Neill, Semi-Riemannian Geometry With Applications to Relativity}, 
	Academic Press,
	1983.
\bibitem{RvD} 
	M.~A.\ Rieffel and A. Van Daele, 
  	\textit{A bounded operator approach to {T}omita-{T}akesaki theory},
  	Pacific J. Math.~\textbf{69} 
	(1977) 
	187--221.
\bibitem{S97} 
	B. Schroer, 
	\textit{Modular localization and the Bootstrap-formfactor program},
	Nucl.\ Phys. \textbf{B 499} 
	(1997) 
	547--568.
\bibitem{Verch}
	R. Verch, 
	\textit{Antilocality and a Reeh-Schlieder theorem on manifolds}, 
	Lett. Math. Phys.~\textbf{28}, 
	{(1993)}
	{143--154}. 
\bibitem{Weiner} 
	M. Weiner
	\textit{An algebraic version of Haag's theorem}, 
	{Comm. Math. Phys. \textbf{305}}
	{(2011)}
	{469--485}.
\end{thebibliography}
\end{document}